\let\cite\citep
\newcommand{\printfnsymbol}[1]{%
  \textsuperscript{\@fnsymbol{#1}}%
}
\newtheorem{theorem}{Theorem}[section]
\newtheorem{lemma}[theorem]{Lemma}
\newtheorem{proposition}[theorem]{Proposition}
\newtheorem{remark}[theorem]{Remark}
\newtheorem{definition}[theorem]{Definition}
\title{Incentive-Compatible Vertiport Reservation in Advanced \\ Air Mobility: An Auction-Based Approach
}
\author{Pan-Yang Su\thanks{Equal Contribution.} \thanks{EECS, University of California, Berkeley, CA 94720 (emails: \texttt{\{pan\_yang\_su, chinmay\_maheshwari, victoria\_tuck, sastry\} at berkeley dot edu}).} , Chinmay Maheshwari\footnotemark[1] \footnotemark[2] , Victoria Marie Tuck\footnotemark[2] , and Shankar Sastry\footnotemark[2]
}
\date{}
\begin{document}

\maketitle
\thispagestyle{empty}
\pagestyle{empty}

\begin{abstract}
The rise of advanced air mobility (AAM) is expected to become a multibillion-dollar industry in the near future. Market-based mechanisms are touted to be an integral part of AAM operations, which comprise heterogeneous operators with private valuations. In this work, we study the problem of designing a mechanism to coordinate the movement of electric vertical take-off and landing (eVTOL) aircraft, operated by multiple operators each having heterogeneous valuations associated with their fleet, between vertiports, while enforcing the arrival, departure, and parking constraints at vertiports. Particularly, we propose an incentive-compatible and individually rational vertiport reservation mechanism that maximizes a social welfare metric, which encapsulates the objective of maximizing the overall valuations of all operators while minimizing the congestion at vertiports. Additionally, we improve the computational tractability of designing the reservation mechanism by proposing a mixed binary linear programming approach that leverages the network flow structure.

\end{abstract}

\newpage
\section{Introduction}

Advanced air mobility (AAM) encompasses the utilization of unmanned aerial vehicles (UAVs), air taxis, and various cargo and passenger transport solutions. This innovative approach taps into previously unexplored airspace, poised to revolutionize urban airspace. A recent report forecasts the air mobility market alone to exceed US\$50 billion by 2035, underlining this area's immense growth potential \cite{cohen2021urban}. 

Despite the widespread optimism surrounding AAM, the design of regulatory policies remains an open problem. While ideas from conventional air traffic management (e.g. \cite{doi:10.1287/opre.46.3.406, doi:10.1287/trsc.34.3.239.12300, doi:10.1287/opre.1100.0899, 4282854, 10.1007/978-3-642-86726-2_17}) could be leveraged, they often fall short in accommodating the dynamic and adaptable nature of AAM operations \cite{bichler2023airport}, resulting from on-demand requests from operators with heterogeneous private valuations
\cite{skorup2019auctioning, seuken2022market}.
Indeed, the administrative management methods prevalent in traditional air traffic management, such as grand-fathering rights, flow management, and first-come-first-serve, prove ineffective for AAM operations \cite{doi:10.2514/6.2020-2903, doi:10.2514/6.2020-2203} as these approaches fail to elicit the heterogeneous private valuations (arising from different aircraft specifications, demand realization, etc.) different operators have on using AAM resources. Furthermore, they risk fostering inefficient and anti-competitive outcomes, as evidenced in traditional airspace operations \cite{DIXIT2023102971}. Recognizing the need for tailored regulation, the Federal Aviation Administration (FAA) is actively developing a \emph{clean-slate} congestion management framework for AAM operations to ensure efficiency, fairness, and safety \cite{faa}.

Market-based congestion management mechanisms have been proposed as potential solutions for AAM operations \cite{chin2023traffic, balakrishnan2022cost, wang2023learning, doi:10.2514/6.2020-2203, skorup2019auctioning, seuken2022market}. 
\begin{figure}
\centering
\includegraphics[width=.8\linewidth]{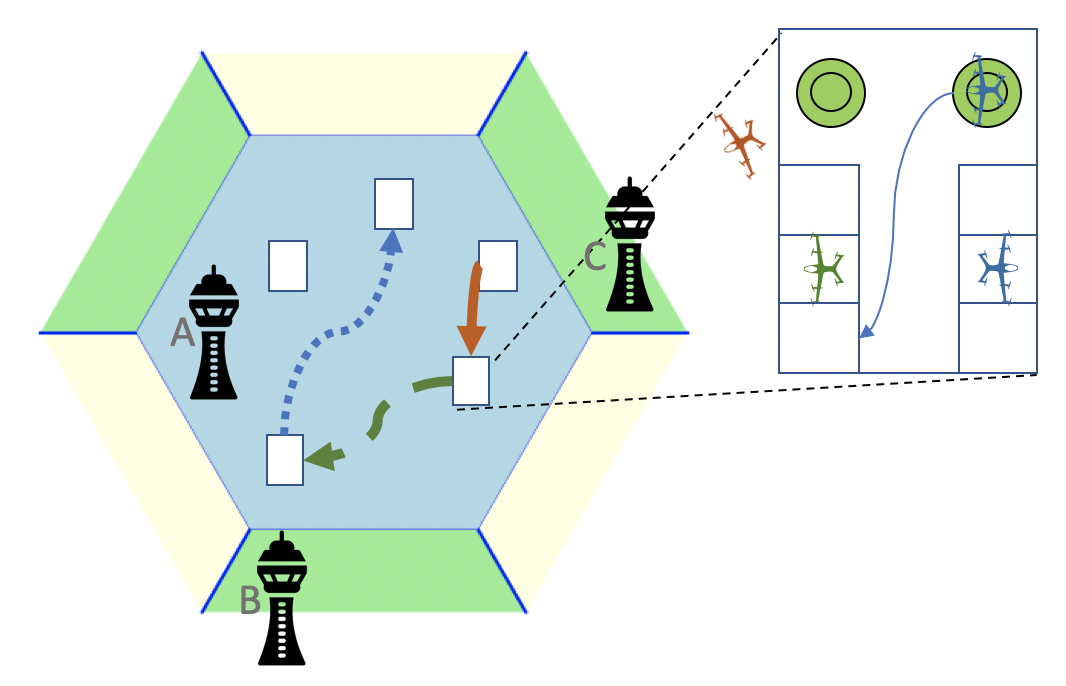}
\caption{Schematic representation of the air traffic network with a service provider tasked with coordinating the movement of aircraft of various fleet operators between vertiports in its domain. Each vertiport has a constraint on the number of arriving aircraft, departing aircraft, and parked aircraft.}
\label{fig: system model}
\end{figure}
 Even in conventional airspace management, market-based mechanisms are extensively studied such as \cite{BALL2018186, BASSO2010381, 6bec411b-7eef-3480-b725-7f8cb05b9529, MEHTA202041}, where both theoretical and empirical evidence show their precedence over administrative approaches \cite{DIXIT2023102971}. However, the design of market-based mechanisms that guarantee safety, efficiency, and fairness under the heterogeneous and on-demand nature of AAM operations has remained elusive as the existing approaches concentrate heavily on tactical deconfliction \cite{8569645, doi:10.2514/6.2020-0660}, while not accounting for efficiency, fairness and the economic incentives of operators \cite{9339882, sun2023fair, chin2023protocol, chin2023traffic, wang2023learning, balakrishnan2022cost, doi:10.2514/6.2020-2903, doi:10.2514/6.2020-2203}.
 
In this paper, we introduce an auction-based mechanism for a prominent AAM scenario of vertiport reservation, where electric vertical take-off and landing (eVTOL) operators with heterogeneous private valuations need to be coordinated to use vertiports based on their realized demands. This problem is challenging for three main reasons. First, the resulting reservation must ensure efficient, fair, and safe allocation of resources. Second, the operators may misreport their private valuations and demands to gain access to more valuable airspace resources (i.e. ensuring \emph{incentive compatibility}). Third, 
the computation of these auction mechanisms is combinatorial, as evidenced by existing air traffic flow management frameworks \cite{doi:10.1287/opre.46.3.406, doi:10.1287/trsc.34.3.239.12300, doi:10.1287/opre.1100.0899} (i.e. ensuring fast computability). 
Thus, the main question we set out for this work is: 
\begin{quote}
    \emph{
    How to design an efficient, fair, and safe vertiport reservation mechanism for heterogeneous and on-demand nature of eVTOL operators, while ensuring incentive compatibility and faster computation?
    }
\end{quote}

We consider an air transportation network (ATN) managed by a service provider (SP). The SP is responsible for ensuring the efficient, safe, and fair movement of aircraft operated by various fleet operators (FOs) between vertiports (as depicted in Figure \ref{fig: system model}). The goal of the SP is to maximize a metric of \emph{social welfare} that is comprised of two objectives: \((i)\) maximize the overall (weighted) valuations\footnote{We allow the SP to weigh FOs differently in order to encourage new-comers in this emerging market.} of all FOs, and \((ii)\) minimize excessive congestion at vertiports\footnote{{Note that we \emph{only} consider congestion at the vertiports in this work. An extension to airborne congestion is discussed in Subsection \ref{Sec: speed-up}.}}. Additionally, the SP must \((iii)\) enforce arrival, departure, and parking capacity constraints at vertiports, and \((iv)\) elicit truthful valuations from heterogeneous FOs in the form of bids. 

We propose an auction mechanism, to be used by the SP, that satisfies \((i)-(iv)\). In this mechanism, using the bids submitted by FOs, the SP allocates the resources by maximizing {social welfare}, subject to capacity constraints. 
Next, the SP charges each FO a payment based on the \emph{externality} imposed by them, which is assessed by the difference in the optimal social welfare of remaining FOs when this FO is included versus when it is excluded from the auction environment. Note that this payment mechanism is inspired by the generalized Vickrey–Clarke–Groves (VCG) mechanism \cite{Nisan_Roughgarden_Tardos_Vazirani_2007}. We theoretically study the properties of the proposed mechanism in terms of incentive compatibility, individual rationality, and social welfare maximization (cf. Theorem \ref{thm: multi-capacity proof}). 

There are two computational challenges associated with designing this mechanism. First, naively optimizing social welfare over the set of feasible allocations could be computationally challenging. Therefore, we frame the problem as a mixed binary linear program by constructing a network-flow graph to reduce the number of binary variables. Second, the computation of externality in the payment mechanism, which requires maximizing social welfare over the set of feasible allocations, requires characterizing the set of feasible allocations when an FO is excluded from the auction environment, which is non-trivial as the underlying resource allocation problem is an exchange problem. Therefore, we introduce the idea of \emph{pseudo-bids}, where we simply set a bid of \(0\) to an FO while computing the optimal allocation when this FO is excluded from the auction environment. 

We note two important features of the problem we study in this work. First, {we focus only on strategic deconfliction where the safety is encoded in the form of minimizing congestion and ensuring capacity constraints, and not on tactical deconfliction. However, our approach can be integrated into the airborne automation workflow proposed in \cite{wei2023arrival} to also account for tactical deconfliction.
 } 
 Second, this problem is an ``exchange problem", where some of the resources desired by any FO could be occupied by aircraft of other FOs, and a feasible allocation in this setting needs to exchange the resources between FOs while respecting capacity constraints. In constrast, the standard slot allocation problems studied in conventional air traffic literature (cf. \cite{DIXIT2023102971, MEHTA202041, BALL2018186, 68d5566f-a8c4-3aac-89e0-3dad8b73de7f, PERTUISET201466, doi:10.1287/trsc.2020.0995, bichler2023airport}) are ``assignment problems'' where the slots need to be assigned to airlines and not exchanged between airlines.

\emph{Notation: }We denote the set of real numbers by \(\mathbb{R}\), non-negative real numbers by \(\mathbb{R}_+\), integers by \(\mathbb{Z}\), non-negative integers by \(\mathbb{Z}_+\), and natural numbers by \(\mathbb{N}\). For $N \in \mathbb{N}$, we define $[N] := \{1, 2, ..., N\}$. The indicator function is denoted as $\mathbf{1}(\cdot)$, which is 1 when $(\cdot)$ is true and 0 otherwise.
When indexing a set $b  = \{b_1, b_2, ..., b_N\}$, we follow the standard game-theoretic notation: $b_{-i} := \{b_1, ..., b_{i-1}, b_{i+1}, ..., b_N\}$.

\section{Problem Setup}
\subsection{System Model}
\label{system model}
We consider an air transportation network (ATN), comprised of multiple vertiports, which are used by electric vertical take-off and landing (eVTOL) aircraft. 
We focus on a strategic deconfliction mechanism that complements the tactical deconfliction algorithms proposed in \cite{8569645, 9756871,  doi:10.2514/6.2020-0660, SHAO2021103385}. 
The scheduling mechanism proceeds over non-overlapping time slots with a receding time horizon. 
At the beginning of each time slot, all fleet operators (FOs) submit a \emph{menu} of desired origin-destination pairs and the corresponding bids specifying how much they are willing to pay for getting scheduled.
Then, the service provider (SP) will compute a feasible allocation and payment and execute them in the next time slot.
The granted aircraft can now go to their desired locations. In most congested vertiports, when the parking capacity is fully utilized, any additional arrival would necessitate a simultaneous departure of an aicraft from that vertiport.
Thus, this is an ``exchange problem'' as opposed to the ``assignment problem'' studied in other air traffic allocation problems \cite{DIXIT2023102971, MEHTA202041, BALL2018186, 68d5566f-a8c4-3aac-89e0-3dad8b73de7f, PERTUISET201466, doi:10.1287/trsc.2020.0995, bichler2023airport}.

We denote the set of vertiports by $\sector$, the set of FOs by $\OptSet$, and the set of eVTOL aircraft by $\UAVSet$. We consider the problem for \(H\) time slots. 

\paragraph{Vertiports} 
At any time $t \in [\horizon]$, each vertiport $r \in \sector$ has three kinds of capacity constraints \footnote{{The arrival, departure and parking capacity constraints in our model are exogeneously determined at every time step and are un-correlated between two consecutive time steps. Extending our model to account for correlations is an interesting direction of future research.}}: \textit{(i)} \emph{arrival capacity constraints}, denoted by $\ArrivalCap(r, t) \in \mathbb{Z}_+$, that restrict the number of eVTOLs that can land at vertiport \(r\) at time \(t\); \textit{(ii)} \emph{departure capacity constraints}, denoted by $\DepartureCap(r, t) \in \mathbb{Z}_+$, that restrict the number of eVTOLs that can depart from vertiport \(r\) at time \(t\); \textit{(iii)} \emph{parking capacity constraints}, denoted by $\capacity(r, t) \in \mathbb{Z}_+$, that restrict the number of eVTOLs that can park at vertiport \(r\) at time \(t\). 
\paragraph{Fleet Operators}
Let $\OptsUAVNum_{\Opt}$ be the fleet of aircraft operated by FO $\Opt \in \OptSet$, and $\UAVSet := \{\UAV_{i,j}| i \in \OptSet, j \in \OptsUAVNum_{\Opt}\}$ be the set of all aircraft using the ATN. Each aircraft $\UAV_{i, j}$ is identified by a tuple $\left( \origsector_{i, j}, \UAVMenuNum_{i, j}, \{ \depart_{i, j, k}, \arrive_{i, j, k}, \val_{i, j, k}, \bid_{i, j, k}, \finalsector_{i, j, k}\}_{k \in \UAVMenuNum_{i, j}} \right)$ where 
\((i)\) \(\origsector_{i, j}\in \sector\) is the origin vertiport of aircraft \(\UAV_{i,j}\), \((ii)\) \(\UAVMenuNum_{i, j}\) is the menu of available routes to aircraft \(\UAV_{i,j}\); \((iii)\) any route \(k\in \UAVMenuNum_{i, j}\) implies that aircraft \(\UAV_{i,j}\) departs from \(\origsector_{i, j}\in\sector\) at time $\depart_{i, j, k}$ to arrive at $\finalsector_{i, j, k}\in\sector$ at time $\arrive_{i, j, k}$; \((iv)\) \(\val_{i, j, k}\) denotes the private valuation of aircraft \(\UAV_{i,j}\) to choose the route \(k\in \UAVMenuNum_{i,j}\); and \((v)\) $\bid_{i, j, k}\in \mathbb{R}_+$ is the bid submitted by FO \(i\) to schedule aircraft \(\UAV_{i,j}\) on route \(k\in\UAVMenuNum_{i,j}\). {Note that we include the option to stay parked at the same vertiport in $\UAVMenuNum_{i,j}$, denoted by \(\varnothing\), and set its departure time to \(0\).}

{
Additionally, we denote the joint bid profile of all aircraft operated by FO \(i\in\OptSet\) by $\OptBid_i := (\bid_{i, j, k})_{j \in \OptsUAVNum_i, k\in \UAVMenuNum_{i,j}}$ and joint valuation profile of its fleet by $\OptVal_i := (\val_{i, j, k})_{j \in \OptsUAVNum_i, k\in \UAVMenuNum_{i,j}}$. For succinct notation, we denote the joint bid and valuation profile of all FOs as $\bidSet := (\OptBid_i)_{i \in \OptSet}$ and $\valSet := (\OptVal_i)_{i\in \OptSet}$, respectively.}

\subsection{Problem Formulation}
We consider an SP tasked with coordinating\footnote{We do 
not impose the information 
sharing constraints in 
\cite{balakrishnan2022cost, chin2023protocol}, where 
different sectors have different operators, and an SP only provides the identities,  but not the positions, of  aircraft to neighboring  sectors. We follow the  architecture in the current 
ATFM framework \cite{10.1007/978-3-642-86726-2_17, doi:10.1287/opre.46.3.406, doi:10.1287/trsc.34.3.239.12300, doi:10.1287/opre.1100.0899, 4282854}, where a central SP can aggregate 
information from all the sectors and make decisions.} the movement of aircraft by allocating them to their desired vertiports while ensuring that the capacity constraints are met. Formally, the SP needs to decide on a feasible allocation $\allocation = (\allocation_{i, j, k} \in \{0,1\}| i \in \OptSet, j \in \OptsUAVNum_i, k \in \UAVMenuNum_{i, j}) 
$, where
\begin{align*}
    \allocation_{i, j, k} = \begin{cases}
        1, & \text{if aircraft \(\UAV_{i,j}\) is allocated route \(k\in\UAVMenuNum_{i,j}\),} \\ 
        0, & \text{otherwise}.
    \end{cases}
\end{align*}

Given an allocation $\allocation$, let $\statesector(r, t, \allocation) \in \mathbb{Z}_+$ denote the number of aircraft occupying the parking spots at vertiport \(r\in \sector\) at time $t \in [\horizon]$. 
For every \(r\in\sector\), the initial occupation \(\statesector(r, 1, \allocation)\) is
\begin{equation*}
    S(r,1,x) = \sum_{i\in\OptSet}\sum_{j\in \OptsUAVNum_i} \mathbf{1}(\origsector_{i, j} = r).
\end{equation*}
For concise notation, we shall denote \(S(r,1,x)\) by \(\bar{S}(r)\) for every \(r\in \sector\) since it does not depend on $x$. 
Naturally, it must hold that, for every \(r\in \sector, t\in \{2,\ldots,\horizon\}\),
\begin{equation}
\label{eq: StateEvolution}
\begin{aligned}
    S(r,t,x) &= S(r,t-1,x) + \sum_{i \in \OptSet } \sum_{j \in \OptsUAVNum_i} \sum_{k \in \UAVMenuNum_{i, j}} \allocation_{i, j, k} \mathbf{1}(\finalsector_{i, j, k} = r, \arrive_{i, j, k} = t) \\&\quad  - \sum_{i\in\OptSet}\sum_{j\in \OptsUAVNum_i} \sum_{k \in \UAVMenuNum_{i, j}} \allocation_{i, j, k} \mathbf{1}(\origsector_{i, j} = r, \depart_{i, j, k} = t),
\end{aligned}
\end{equation}
where the second (resp. third) term on the RHS in the above equation denotes the set of incoming (resp. departing) aircraft in vertiport \(r\) at time \(t\). The \emph{residual capacity} at vertiport \(r\in \sector\) at time \(t\in [H]\) is $\residualsector(r, t, \allocation) := \capacity(r, t) - \statesector(r, t, \allocation)$. To ensure the existence of a feasible allocation as defined later in \eqref{eq: feasible}, we assume that $\capacity(r, t) - \bar{S}(r) \geq 0, \forall r\in \sector, t \in [H]$.

An allocation \(x\) is called feasible if it satisfies the following constraints:
\begin{itemize}
    \itemindent=4pt
    \item[(C1)] Each aircraft is allocated at most one route. That is, for every \(i \in \OptSet, j \in \OptsUAVNum_i\), $\sum_{k \in \UAVMenuNum_{i, j}}\allocation_{i, j, k} \leq 1$.
    \item[(C2)] Arrival and departure capacity constraints must be satisfied at every vertiport $r$ at all times. That is, for every \(r\in \sector, t\in [\horizon]\),
    \begin{align*}
        \sum_{i \in \OptSet } \sum_{j \in \OptsUAVNum_i} \sum_{k \in \UAVMenuNum_{i, j}} \allocation_{i, j, k} \mathbf{1}(\finalsector_{i, j, k} = r, \arrive_{i, j, k} = t) \leq \ArrivalCap(r, t), \\ 
        \sum_{i\in\OptSet}\sum_{j\in \OptsUAVNum_i} \sum_{k \in \UAVMenuNum_{i, j}} \allocation_{i, j, k} \mathbf{1}(\origsector_{i, j} = r, \depart_{i, j, k} = t) \leq \DepartureCap(r, t).
    \end{align*}
    \item[(C3)] Parking capacity constraints must be satisfied. That is, for every vertiport $r\in \sector$ at any time $t\in [\horizon]$, $\residualsector(r, t, x) \geq 0$.
\end{itemize}

Consequently, we define 
\begin{align}\label{eq: feasible}
\allocationSet := \left\{\allocation \in \{0, 1\}^{\sum_{i \in \OptSet} \sum_{j \in \OptsUAVNum_i} |\UAVMenuNum_{i, j}|}\Big|~\text{$\allocation$ satisfies (C1)-(C3)}\right\}
\end{align}
to be the set of feasible allocations. 


\begin{definition}[Social Welfare]\label{def: SW}
    Given $\allocation\in \allocationSet$, \emph{social welfare} is defined as follows.
\begin{align}\label{eq: SocWelfare}
\socialWelfare(\allocation; \valSet)\!\!:= \!\!\sum\limits_{i\in \OptSet} \rcof_i \sum\limits_{j \in \OptsUAVNum_i} \sum\limits_{k \in \UAVMenuNum_{i, j}}\!\!\!\!\val_{i, j, k} \cdot \allocation_{i, j, k} - \cc\!\!\sum\limits_{r \in \sector} \sum\limits_{t \in [\horizon]} \!\!\!\!\cpc_{r, t}(\statesector(r, t, x)),
\end{align}
where \((i)\) $\rcof_i \in \mathbb{R}_+$ is the weight factor specifying the relative importance of different FOs\footnote{Similar weight factors, termed as \emph{remote city opportunity factor}, are used in \cite{DIXIT2023102971}.}, \((ii)\) $\cpc_{r, t}: \mathbb{Z}_+ \rightarrow \mathbb{R}_+$ with $\cpc_{r, t}(0) = 0$ is discrete convex\footnote{Based on \cite{dca}, a function $f: \mathbb{Z} \rightarrow \mathbb{R}$ is discrete convex if $f(x+1) - f(x) \geq f(x) - f(x-1), \ \forall x \in \mathbb{Z}$.} to capture increasing marginal cost of congestion\footnote{{While we only consider the congestion resulting from parked aircraft, it is straightforward to extend our formulation to arriving and departing aircraft; see Subsection \ref{Sec: speed-up}.}}, and $(iii)~ \cc \in \mathbb{R}_+$ is the ratio between the congestion cost and the cumulative weighted valuations of FOs.
Furthermore, we define an optimal allocation as
\begin{align}
\optimalAllocation(\valSet) \in \underset{\allocation\in \allocationSet}{\arg\max} \ \socialWelfare(\allocation; \valSet),    
\end{align}
where ties are resolved arbitrarily. 
\end{definition}

\begin{remark}
    The social welfare objective \eqref{eq: SocWelfare} captures three main desiderata: efficiency, fairness, and safety. The objective \eqref{eq: SocWelfare} incorporates efficiency through additive valuations of FOs. Additionally, it incorporates the \emph{proportional fairness } criterion\footnote{We emphasize that {the fairness is at the FO-level.}} by assigning different weights to the valuations of different FOs, denoted by \((\rho_i)_{i\in\OptSet}\). Well-constructed weights can prevent larger FOs from monopolizing the resources; {for example, using the logarithm of the number of aircraft as an FO's weight.} Finally, it encompasses safety considerations in two ways: first, through capacity constraints; and second, by introducing a congestion-dependent term in \eqref{eq: SocWelfare} that penalizes vertiports when the number of aircraft increases. With these three considerations, the definition of social welfare aligns closely with that presented in \cite{DIXIT2023102971}.
\end{remark}

We assume the SP does not have access to the true valuations $\valSet$, as it is private information. Instead, the SP must use bids \(\bidSet\) reported by the FOs to allocate the aircraft to vertiports through an auction mechanism. More formally, given a bid profile
$\bidSet$, the SP uses a mechanism $\mechanism =(\allocationMech, (\pricingMech_i)_{i\in\OptSet})$, where for a given bid profile \(\bidSet\), 
 \((i)\) \(\allocationMech(\bidSet)\in \allocationSet\) is the allocation proposed by the mechanism; and \((ii)\) \(\pricingMech_i(\bidSet) \in \mathbb{R}\) denotes the payment charged to FO \(i\in \OptSet\). 
 Under the mechanism \(\mechanism\), the utility derived by any FO \(i\in \OptSet\) is
\begin{align}\label{eq: UtilityUAV}
    \utilityUAV_i(\bidSet;\mechanism) = \sum_{j \in \OptsUAVNum_i} \sum_{k \in \UAVMenuNum_{i, j}}\val_{i, j, k} \textbf{1} (\allocationMech_{i, j, k}(\bidSet)) - \pricingMech_i(\bidSet).
\end{align}
Given any arbitrary valuation profile \(V\), the goal is to design a vertiport reservation mechanism $\mechanism = (\allocationMech, \pricingMech)$ with the following desiderata.
\begin{itemize}
    \itemindent=4pt
    \item[\hypertarget{D1}{(D1)}] \textbf{Incentive Compatibility (IC):} Bidding truthfully is each FO's (weakly) dominant strategy, i.e., for every \(i\in \OptSet\), \( \OptBid_{-i} \in \mathbb{R}_+^{\sum_{\ell\in\OptSet\backslash\{i\}}\sum_{j\in\OptsUAVNum_{\ell }}|\UAVMenuNum_{\ell,j}|}\),
    \begin{align*}
    \OptVal_i \in \underset{\OptBid_i\in \mathbb{R}_+^{\sum_{j\in\OptsUAVNum_i}|\UAVMenuNum_{i,j}|}}{\arg\max}~ \utilityUAV_i(\OptBid_i,\OptBid_{-i};\mechanism).
    \end{align*}
    \item[\hypertarget{D2}{(D2)}] \textbf{Individual Rationality (IR):}  Bidding truthfully results in non-negative utility, i.e., for every \(i\in \OptSet\),
    \begin{equation*}
    \utilityUAV_i(\OptVal_i, \OptBid_{-i}; \mechanism) \geq 0, \quad \forall \ \OptBid_{-i} \in \mathbb{R}_+^{\sum_{\ell\in\OptSet\backslash\{i\}}\sum_{j\in\OptsUAVNum_{\ell }}|\UAVMenuNum_{\ell,j}|}.
    \end{equation*}
    \item[\hypertarget{D3}{(D3)}] {\textbf{Social Welfare Maximization (SWM):} The resulting allocation maximizes social welfare, i.e., 
    \begin{align*}
    \allocationMech(B) \in \underset{\allocation\in \allocationSet}{\arg\max}
        ~\socialWelfare(\allocation;\valSet).
    \end{align*}}
\end{itemize}
\section{Mechanism Design}
In this section, we present an auction mechanism that satisfies \hyperlink{D1}{(D1)}-\hyperlink{D3}{(D3)} in Subsection \ref{ssec: Mechanism} and prove its theoretical properties in Subsection \ref{ssec: TheoreticalAuction}. We defer the optimization algorithm to Section \ref{ssec: Computational}.

\subsection{Mechanism}
\label{ssec: Mechanism}
Inspired by Myerson's lemma \cite{doi:10.1287/moor.6.1.58},  our approach is to separate the allocation and payment functions so that the latter can ensure IC and IR as long as the former ensures maximization of total welfare in terms of bids submitted.

\noindent \textbf{Allocation Function:} Given a bid profile \(\bidSet\in \mathbb{R}^{\sum_{i\in\OptSet}\sum_{j\in\OptsUAVNum_i}|\UAVMenuNum_{i,j}|}_+\), the allocation is obtained by 
\begin{align}\label{eq: AllocationMultipleCapacity}
\multiCapAllocation(\bidSet) \in \underset{\allocation\in \allocationSet}{\arg\max} ~\SW(\allocation;\bidSet).
\end{align} 

\noindent \textbf{Payment Function:}
We first define a function \(\pseudoBid: \OptSet \times  \mathbb{R}_+^{{\sum_{i\in\OptSet}\sum_{j\in\OptsUAVNum_i}|\UAVMenuNum_{i,j}|}} \rightarrow \mathbb{R}_+^{{\sum_{i\in\OptSet}\sum_{j\in\OptsUAVNum_i}|\UAVMenuNum_{i,j}|}}\) such that for any \(\ell\in \OptSet\) and bid \(\bidSet\in \mathbb{R}_+^{{\sum_{i\in\OptSet}\sum_{j\in\OptsUAVNum_i}|\UAVMenuNum_{i,j}|}}\), 
\begin{align}\label{eq: pseudoBid}
    \pseudoBid_{i,j,k}(\ell, \bidSet) = \begin{cases}
        b_{i,j,k}, & \text{if} \ i \neq \ell, \\ 
        0, & \text{if} \ i = \ell,
    \end{cases} \ \forall \ i\in \OptSet, j\in \OptsUAVNum_i, k\in \UAVMenuNum_{i,j}.
\end{align}
The payment function, given a bid profile \(\bidSet\), is
\begin{equation}
\label{eq: paymentMC}
\multiCapPricing_i(\bidSet) \!\!= \!\! \frac{1}{\rcof_i}\left(\max_{x'\in \allocationSet}\!\totalBid_{-i}(x'; \pseudoBid(i,\bidSet)) \!- \!\totalBid_{-i}(\multiCapAllocation; \bidSet)\right),
\end{equation}
where for every \(i\in \OptSet\), $\allocation\in \allocationSet$, and $\bidSet\in \mathbb{R}_+^{{\sum_{i\in\OptSet}\sum_{j\in\OptsUAVNum_i}|\UAVMenuNum_{i,j}|}}$, 
\begin{equation}
\label{eq: RemainingTB}
\totalBid_{-i}(\allocation;\bidSet) := \!\!\!\!
\sum\limits_{\ell\in \OptSet_{-i}} \!\!\!\!\rcof_\ell \sum\limits_{j \in \OptsUAVNum_\ell} \sum\limits_{k \in \UAVMenuNum_{\ell, j}}\!\!\!\!\bid_{\ell, j, k} \cdot \allocation_{\ell, j, k} - \cc\sum\limits_{r \in \sector} \sum\limits_{t \in [\horizon]}\cpc_{r, t}(\statesector(r, t, \allocation)).
\end{equation}

\begin{remark}
The payment rule is inspired by the VCG mechanism, where each FO is charged a payment based on the \emph{externality} created by them. Particularly, the typical VCG payment for any player is determined by assessing the difference in the optimal social welfare of players when they are present, versus when they are excluded from the auction environment. 
\end{remark}

\begin{remark}
There are some notable differences between the VCG payment and \eqref{eq: paymentMC}. First, since our problem is an ``exchange problem'' and not the typical ``assignment problem'', we need to be cognizant of the physical resources occupied by the aircraft of that operator. However, this would require us to enumerate all the feasible combinations if we were to directly implement VCG mechanisms. To overcome the problem of enumerating all feasible solutions while computing payments, we adopt a novel approach of ``pseudo-bids'', where while computing the payments, each non-participating aircraft is considered to be using a bid of \(0\), as formally described in \eqref{eq: pseudoBid}.

Second, since the objective function \eqref{eq: SocWelfare} is not the summation of the participants' valuations, the typical VCG auction is not directly applicable. Instead, we follow \cite{DIXIT2023102971, Nisan_Roughgarden_Tardos_Vazirani_2007} to devise the payment rule for any $i\in\OptSet$ and $\bid\in \mathbb{R}_+^{\sum_{j\in\OptsUAVNum_i}|\UAVMenuNum_{i,j}|}$.
\end{remark}

\subsection{Theoretical Analysis}\label{ssec: TheoreticalAuction}
\begin{theorem}
\label{thm: multi-capacity proof}
The proposed mechanism $\multiCapMech := (\multiCapAllocation, \multiCapPricing)$, {defined by (\ref{eq: AllocationMultipleCapacity}) and (\ref{eq: paymentMC})} is IC, IR, and SWM.
\end{theorem}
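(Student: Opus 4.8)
The plan is to show that $\multiCapMech$ is a weighted Groves mechanism in disguise, deduce \hyperlink{D1}{(D1)} from the standard Groves argument, and then obtain \hyperlink{D2}{(D2)} and \hyperlink{D3}{(D3)} as short corollaries. Throughout I use three facts already in place: (a) the standing assumption $\capacity(r,t)-\bar S(r)\ge 0$ makes the ``all aircraft stay parked'' allocation feasible, so $\allocationSet$ is a nonempty finite set and every maximum below is attained; (b) $\rcof_i>0$, so dividing by $\rcof_i$ in \eqref{eq: paymentMC} is legitimate; and (c) $\val_{i,j,k}\ge 0$ and $\bid_{i,j,k}\ge 0$.

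The engine of the proof is the identity, valid for every bid profile $\bidSet$, every $i\in\OptSet$, and every $\allocation\in\allocationSet$,
\[
\socialWelfare(\allocation;\bidSet)=\rcof_i\sum_{j\in\OptsUAVNum_i}\sum_{k\in\UAVMenuNum_{i,j}}\bid_{i,j,k}\,\allocation_{i,j,k}+\totalBid_{-i}(\allocation;\bidSet),
\]
which is immediate from comparing \eqref{eq: SocWelfare} with \eqref{eq: RemainingTB}. Two consequences matter. First, $\totalBid_{-i}(\cdot;\bidSet)$ only reads the bids of operators $\ell\ne i$, so $\totalBid_{-i}(\allocation;\pseudoBid(i,\bidSet))=\totalBid_{-i}(\allocation;\bidSet)$ for all $\allocation$; in particular the ``pivot'' term $\max_{\allocation'\in\allocationSet}\totalBid_{-i}(\allocation';\pseudoBid(i,\bidSet))$ of \eqref{eq: paymentMC} does not depend on FO $i$'s own report. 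This is exactly where the \emph{pseudo-bid} does its job: it lets that maximization range over the full feasible set $\allocationSet$ --- so aircraft of FO $i$ may still be relocated whenever doing so relieves congestion for the others --- while contributing nothing through $i$'s (now zeroed) valuations, which is what lets us avoid enumerating feasible re-allocations in the exchange problem.

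Next I would substitute \eqref{eq: paymentMC} into the utility \eqref{eq: UtilityUAV}. Fix $\OptBid_{-i}$, let FO $i$ report $\OptBid_i$ while holding true valuations $\OptVal_i$, and write $x(\bidSet):=\multiCapAllocation(\bidSet)$. Multiplying the utility by $\rcof_i>0$ and applying the identity above term by term yields
\[
\rcof_i\,\utilityUAV_i(\OptBid_i,\OptBid_{-i};\multiCapMech)=\socialWelfare\big(x(\OptBid_i,\OptBid_{-i});(\OptVal_i,\OptBid_{-i})\big)-\max_{\allocation'\in\allocationSet}\totalBid_{-i}\big(\allocation';(\OptVal_i,\OptBid_{-i})\big),
\]
where the subtracted term is a constant in $\OptBid_i$. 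Now \hyperlink{D1}{(D1)} is immediate: by \eqref{eq: AllocationMultipleCapacity}, $x(\OptVal_i,\OptBid_{-i})$ maximizes $\socialWelfare(\cdot;(\OptVal_i,\OptBid_{-i}))$ over $\allocationSet$, hence $\socialWelfare(x(\OptBid_i,\OptBid_{-i});(\OptVal_i,\OptBid_{-i}))\le\socialWelfare(x(\OptVal_i,\OptBid_{-i});(\OptVal_i,\OptBid_{-i}))$ for every $\OptBid_i$, i.e.\ reporting $\OptVal_i$ is a best response against every $\OptBid_{-i}$. For \hyperlink{D2}{(D2)}, set $\OptBid_i=\OptVal_i$: the right-hand side becomes $\max_{\allocation}\socialWelfare(\allocation;(\OptVal_i,\OptBid_{-i}))-\max_{\allocation'}\totalBid_{-i}(\allocation';(\OptVal_i,\OptBid_{-i}))$, and since the identity gives $\socialWelfare(\allocation;(\OptVal_i,\OptBid_{-i}))-\totalBid_{-i}(\allocation;(\OptVal_i,\OptBid_{-i}))=\rcof_i\sum_{j,k}\val_{i,j,k}\,\allocation_{i,j,k}\ge 0$ pointwise, the first maximum dominates the second, so $\utilityUAV_i(\OptVal_i,\OptBid_{-i};\multiCapMech)\ge 0$. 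Finally \hyperlink{D3}{(D3)}: truthful bidding is a dominant strategy by \hyperlink{D1}{(D1)}, so $\bidSet=\valSet$, and then $\multiCapAllocation(\valSet)\in\arg\max_{\allocation\in\allocationSet}\socialWelfare(\allocation;\valSet)$ holds by the very definition \eqref{eq: AllocationMultipleCapacity}.

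I do not anticipate a real obstacle --- the content is a weighted-Groves argument, and the only things that need care are bookkeeping: carrying the weights $\rcof_i$ and the compensating factor $1/\rcof_i$ so the payment collapses to Groves form, and the (easy but essential) verification that the pivot term is independent of $i$'s report despite the pseudo-bid substitution. The single genuinely nonstandard point --- that replacing literal exclusion of FO $i$ by the pseudo-bid \eqref{eq: pseudoBid} does not disturb incentives --- is exactly what the observation $\totalBid_{-i}(\cdot;\pseudoBid(i,\bidSet))=\totalBid_{-i}(\cdot;\bidSet)$ settles.
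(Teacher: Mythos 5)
Your proof is correct, and its engine is the same generalized-VCG argument the paper uses: the decomposition $\socialWelfare(\allocation;\bidSet)=\rcof_i\sum_{j\in\OptsUAVNum_i}\sum_{k\in\UAVMenuNum_{i,j}}\bid_{i,j,k}\allocation_{i,j,k}+\totalBid_{-i}(\allocation;\bidSet)$ together with the fact that the pivot term in \eqref{eq: paymentMC} does not depend on FO $i$'s own report is exactly what makes $(\multiCapAllocation,\multiCapPricing)$ a weighted Groves mechanism, and your IR computation mirrors the paper's chain of inequalities in \eqref{eq: IR} almost line by line. The difference is one of self-containedness rather than of method: the paper establishes IC by observing that \eqref{eq: AllocationMultipleCapacity} is an affine maximizer and invoking \cite[Proposition 9.31]{Nisan_Roughgarden_Tardos_Vazirani_2007}, whereas you unpack that citation into the explicit Groves best-response argument, including the check that $\totalBid_{-i}(\cdot;\pseudoBid(i,\bidSet))=\totalBid_{-i}(\cdot;\bidSet)$, which makes transparent why the pseudo-bid device leaves incentives undisturbed in this exchange setting (in the payment formula it is mathematically inert, since $\totalBid_{-i}$ never reads FO $i$'s bids, while still letting the pivot maximization range over all of $\allocationSet$). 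You also spell out SWM explicitly (IC forces $\bidSet=\valSet$ under dominant-strategy play, after which the allocation rule maximizes $\socialWelfare(\cdot;\valSet)$ by definition), a point the paper's proof leaves implicit, and you correctly flag the standing assumptions $\rcof_i>0$ and nonnegativity of valuations that both arguments need. In short: same approach, with your version buying a citation-free and fully explicit treatment of all three properties at the cost of a little extra bookkeeping.
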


\begin{proof}
Observe from \eqref{eq: SocWelfare} that $\SW(\allocation;\valSet)$ is a weighted summation of FOs' valuations and the congestion cost.
Since the congestion cost is independent of valuations, $\bar{x}(V) \in \arg\max_{x\in X}\SW(x;\valSet)$ is an affine maximizer with respect to FOs' valuations, as defined in \cite[Definition 9.30]{Nisan_Roughgarden_Tardos_Vazirani_2007}. Thus, the allocation function (\ref{eq: AllocationMultipleCapacity}) and the payment function (\ref{eq: paymentMC}) form a generalized VCG mechanism, and IC directly follows from \cite[Proposition 9.31]{Nisan_Roughgarden_Tardos_Vazirani_2007}. 
Finally, IR follows from \cite[Lemma 9.20]{Nisan_Roughgarden_Tardos_Vazirani_2007} since the bids are non-negative, and the allocation is an affine maximizer, as formally proved below.  

For any \(B_{-i}\in \mathbb{R}_+^{\sum_{\ell\in \OptSet\backslash\{i\}, j\in\OptsUAVNum_\ell}|\UAVMenuNum_{\ell, j}|}\),  
\begin{align}\label{eq: IR}
&\utilityUAV_i(V_i,\bidSet_{-i};\mechanism) \!=\!\! \sum_{j \in [\OptsUAVNum_i]} \sum_{k \in [\UAVMenuNum_{i, j}]}\!\!\!\!\!\!\val_{i, j, k} \textbf{1} (\allocationMech_{i, j, k}(V_i, \bidSet_{-i})) - \pricingMech_i(V_i, \bidSet_{-i})\notag \\
&= \frac{1}{\rho_i}\bigg(\rho_i\sum_{j \in [\OptsUAVNum_i]} \sum_{k \in [\UAVMenuNum_{i, j}]}\!\!\!\!\val_{i, j, k} \textbf{1} (\allocationMech_{i, j, k}(V_i, \bidSet_{-i}))+ \!\totalBid_{-i}(\multiCapAllocation; V_i, \bidSet_{-i}) \! \notag \\ &\hspace{3cm}-\max_{x'\in \allocationSet}\!\totalBid_{-i}(x'; \pseudoBid(i,V_i, \bidSet_{-i}))\bigg) \notag \\ 
&= \frac{1}{\rho_i}\bigg(\totalBid(\multiCapAllocation; V_i, \bidSet_{-i}) -\max_{x'\in \allocationSet}\!\totalBid_{-i}(x'; \pseudoBid(i,V_i, \bidSet_{-i}))\bigg) .
\end{align}

Since \(\bar{x}\in \arg\max_{x\in X} \totalBid(\multiCapAllocation; V_i, \bidSet_{-i})\), it holds that \(\totalBid(\multiCapAllocation; V_i, \bidSet_{-i}) \geq \totalBid(x^\dagger; V_i, \bidSet_{-i}),\) where \(x^\dagger \in \arg\max_{x'\in \allocationSet}\!\totalBid_{-i}(x'; \pseudoBid(i,V_i, \bidSet_{-i}))\). Thus, we obtain 
\begin{align*}
    &\utilityUAV_i(V_i,\bidSet_{-i};\mechanism) \!\geq\! \frac{1}{\rho_i}\!\bigg(\!\!\totalBid(x^\dagger; V_i, \bidSet_{-i}) \!-\!\totalBid_{-i}(x^\dagger; \pseudoBid(i,V_i, \bidSet_{-i}))\!\!\!\bigg) \\
    &= \frac{1}{\rho_i}\sum_{j \in [\OptsUAVNum_i]} \sum_{k \in [\UAVMenuNum_{i, j}]}\val_{i, j, k} \textbf{1} (x^\dagger_{i, j, k}(V_i, \bidSet_{-i})) \geq 0. 
\end{align*}

\end{proof}

\begin{figure*}[!ht]
        \centering
    \begin{minipage}[l]{0.82\textwidth}
        \centering
    \includegraphics[width=\textwidth]{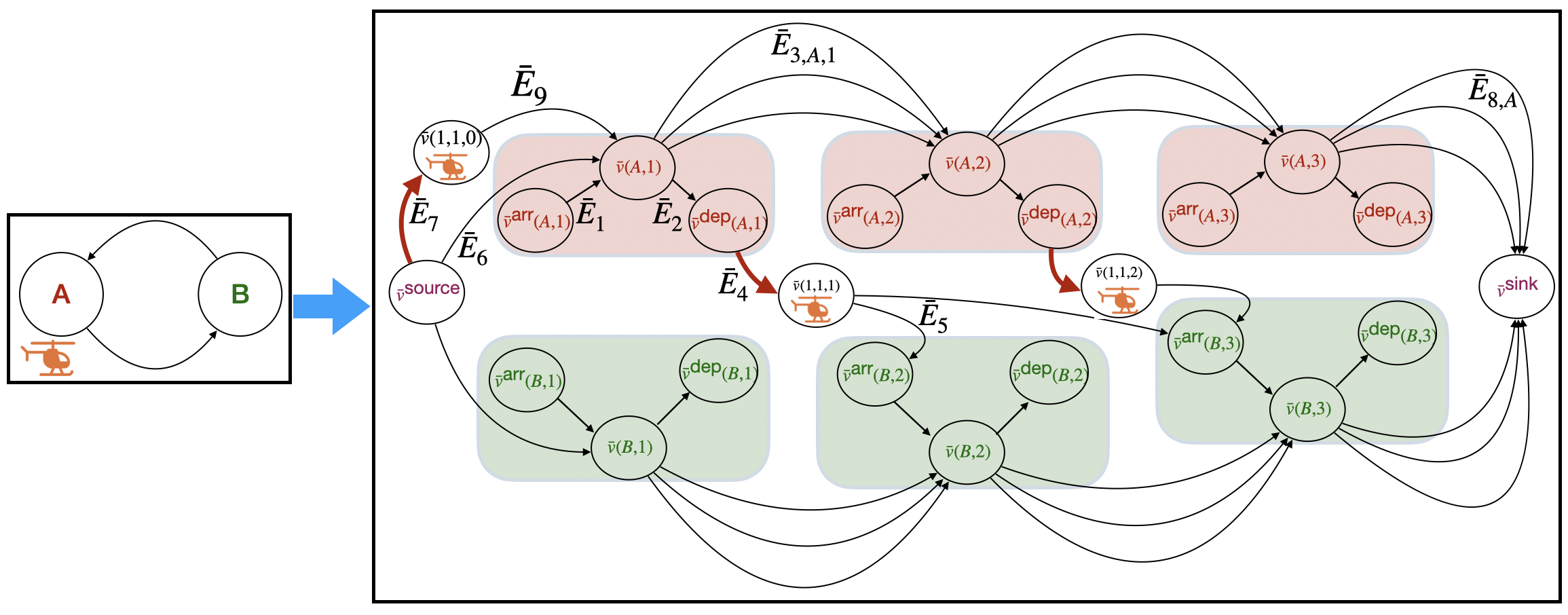}
    \end{minipage}
      \caption{Auxiliary graph \(\bar{G}\) constructed from an ATN with two vertiports and one aircraft over three time slots.
      } 
      \label{fig: AuxGraph}
\end{figure*}

\section{Optimization Algorithm}\label{ssec: Computational}
In this section, we formulate \eqref{eq: AllocationMultipleCapacity} as a mixed binary linear program (MBLP), as shown in \eqref{eq: ILPNew}. We derive this in three steps. First, in Subsection \ref{sec: Auxiliary Graph Construction}, we construct a time-extended flow network, where vertices are vertiport-time and aircraft-time pairs with edges capturing capacity constraints and route allocation. Then, using binary variables ($\delta_{i,j,\tau}$ as formally defined later in \eqref{eq: ILP_BinaryVariable} and \eqref{eq: ILP_VariableDelta}) to ensure that each aircraft is allocated one route, we formulate a mixed integer linear program (MILP \eqref{eq: ILP}) in Subsection \ref{ssec: MILP}. This MILP has fewer binary variables than \eqref{eq: AllocationMultipleCapacity} when the number of unique departure times for any aircraft is less than the size of its menu. Finally, in Subsection \ref{ssec: MBLP}, we show that the total unimodularity of the constraint matrix ($\incidence_\star$ in \eqref{eq: ILP_FlowBalance}) guarantees that all flows are integral for each binary variable assignment, so we can drop the integrality constraint \eqref{eq: ILP_VariableAllocation} and get the final MBLP formulation \eqref{eq: ILPNew}.

\subsection{Auxiliary Graph} \label{sec: Auxiliary Graph Construction}

We construct an auxiliary graph $\auxG = (\auxV, \auxE)$ as detailed below. Figure \ref{fig: AuxGraph} shows a pictorial depiction.

\begin{itemize}
    \item[(i)] \textit{Set of vertices \(\auxV = \cup_{\ell = 1}^{3}\auxV_\ell\)}. We define these sets below:
    \begin{itemize}
        \item \(\auxV_1 := \{(\verAux(r,t), \verAux^{\arr}(r,t),\verAux^{\dep}(r,t))| r\in \sector, t\in [\horizon]\}\): 
        
        We consider three replica for each vertiport \(r \in \sector\) at time \(t\in [\horizon]\), denoted as \(\verAux(r,t), \verAux^{\arr}(r,t),\) and \(\verAux^{\dep}(r,t)\). {These vertices, along with $\auxE_1$, $\auxE_2$, $\auxE_3$, and $\auxE_8$ defined later, embed capacity constraints and congestion costs into the graph structure.}
        
        \item \(\auxV_2 := \{\verAux({i,j,\tau})| i\in \OptSet, j\in \OptsUAVNum_i, \tau\in  T^{\dep}_{i,j}\}\): 
        
        For each \(i\in\OptSet, j\in \OptsUAVNum_i,\) we consider one vertex corresponding to all routes that have the same departure time. More formally, for every \(i\in\OptSet, j\in \OptsUAVNum_i,\) define 
        \(
        T^{\dep}_{i,j} := \cup_{k\in \UAVMenuNum_{i,j}}\left\{\depart_{i,j,k}\right\},\)
        to be the  set of unique departure times amongst all routes. 
        We consider one vertex corresponding to each \(i\in \OptSet, j\in \OptsUAVNum_i,\)  and \(\tau\in  T^{\dep}_{i,j}\), denoted as \(\verAux({i,j,\tau})\), {which, along with $\auxE_4$, $\auxE_5$, $\auxE_7$, and $\auxE_9$ defined later, embeds the route choice of the aircraft.}
        
        \item \(\auxV_3  := \{\verAux^{\sps}, \verAux^{\dps}\}\): 
        
        \(\verAux^{\sps}\) and \(\verAux^{\dps}\) denote the source and sink in the flow network {(to be described shortly)}. {These vertices, along with $\auxE_6$, $\auxE_7$, and $\auxE_8$, ensure flow conservation of the parking aircraft.}
    \end{itemize}    
    
    \item[(ii)] \textit{Set of edges \(\auxE = \cup_{\ell = 1}^{9}\auxE_\ell \subseteq \auxV\times \auxV\times \mathbb{Z}_+\times \mathbb{Z}_+\times \mathbb{R}\)}, where each edge is identified with a tuple \((r,r',\overline{\textbf{c}},\underline{\textbf{c}},\bar{\textbf{w}})\) such that \((i)\) \(r,r'\in \sector\) are the upstream and downstream vertiport on an edge, respectively, \((ii)\) \(\overline{\textbf{c}},\underline{\textbf{c}} \in \mathbb{Z}_+\) are the upper and lower bound on the capacity of the edge, respectively, and \((iii)\) \(\bar{\textbf{w}}\in \mathbb{R}\) is the edge weight.
    \begin{itemize}
    \item \(\auxE_1 := \{(\verAux^{\arr}(r,t), \verAux(r,t),\overline{\textbf{c}}=\arr(r,t), \underline{\textbf{c}}=0,\bar{\textbf{w}}=0)| r\in \sector, t\in [\horizon]\}\).
    \item \(\auxE_2 := \{(\verAux(r,t), \verAux^{\dep}(r,t),\overline{\textbf{c}}=\dep(r,t), \underline{\textbf{c}}=0,\bar{\textbf{w}}=0 )| r\in \sector, t\in [\horizon]\}\).
    \item \(\auxE_3 := \cup_{r\in \sector, t\in [\horizon-1]}\auxE_{3,r,t}\): 
    
    For every \(r\in \sector, t\in [H-1]\), we consider \(\capacity(r, t)\) edges connecting  \(\verAux(r,t)\) and \(\verAux(r,t+1)\). We denote this set by \(\auxE_{3,r,t}\). For any \(q\in [\capacity(r,t)]\), we denote the weight of the \(q-\)th edge in \(\auxE_{3,r,t}\) by \(\bar{\textbf{w}}_{q,r,t}\), and upper and lower capacity by \(\overline{\textbf{c}}_{q,r,t}\) and \( \underline{\textbf{c}}_{q,r,t}\), respectively. For any \(r\in \sector, q\in [\capacity(r,t)]\), \(\bar{\textbf{w}}_{q,r,t} = -\lambda(\cpc_{r,t}(q) - \cpc_{r,t}(q-1))\), \(\overline{\textbf{c}}_{q,r,t}= 1\), and \(\underline{\textbf{c}}_{q,r,t} = 0\). 
    \item \(\auxE_4 := \{(\verAux^\dep(\origsector_{i,j},\tau), \verAux({i,j,\tau}), \overline{\textbf{c}} = \underline{\textbf{c}} = \delta_{i,j,\tau}, \bar{\textbf{w}} = 0)|i\in\OptSet, j\in \OptsUAVNum_i, \tau\in T_{i,j}^{\dep}\backslash\{0\}\}\):
    
    \(\delta_{i,j,\tau}\in \{0,1\}\) is a variable defined later.
    \item \(\auxE_5 := \{(\verAux({i,j,\depart_{i,j,k}}), \verAux^\arr({\finalsector_{i,j,k}}, \arrive_{i,j,k}), \overline{\textbf{c}} = 1, \underline{\textbf{c}} = 0, \bar{\textbf{w}} = \rho_i b_{i,j,k})|i\in\OptSet, j\in \OptsUAVNum_i,k\in \UAVMenuNum_{i,j}\backslash\{\varnothing\}\}\).
    \item \(\auxE_6 := \{(\verAux^{\sps},\verAux(r,1), \overline{\textbf{c}} =\underline{\textbf{c}}= \bar{S}(r)-\sum_{i\in \OptSet}\sum_{j\in\OptsUAVNum_i}\delta_{i,j,0}\mathbf{1}(\origsector_{i,j}=r), \bar{\textbf{w}} =0)|r\in \sector\}\)\footnote{Recall that \(\bar{S}(r)\) is the state of occupancy of vertiport \(r\) at \(t=1\).}.
    
    \item \(\auxE_7 := \{(\verAux^{\sps},\verAux(i,j,0), \overline{\textbf{c}} = \underline{\textbf{c}} = \delta_{i,j,0}, \bar{\textbf{w}}= \rho_i b_{i,j,\varnothing})|i\in \OptSet, j\in \OptsUAVNum_i\}\):
    
    \(\delta_{i,j,0}\in \{0,1\}\) is a variable which would be defined shortly, and  {\(b_{i,j, \varnothing}\) is the bid placed by aircraft \(a_{i,j}\) on staying parked at the same location.}
    \item \(\auxE_8 := \cup_{r\in R}\auxE_{8,r}\): 
    
    For every \(r\in \sector\), we consider \(\capacity(r, H)\) edges connecting \(\verAux(r,H)\) and \(\verAux^{\dps}\). We denote these edges by \(\auxE_{8,r}\). 
    For any \(q\in [\capacity(r,H)]\), we denote the weight of the \(q-\)th edge in \(\auxE_{8,r}\) by \(\bar{\textbf{w}}_{q,r,H}\), and upper and lower capacity by \(\overline{\textbf{c}}_{q,r,H}\) and \(\underline{\textbf{c}}_{q,r,H}\) respectively.  For any \(r\in \sector, q\in [\capacity(r,H)]\), \(\bar{\textbf{w}}_{q,r,H} = -\lambda(\cpc_{r,H}(q) - \cpc_{r,H}(q-1)), \overline{\textbf{c}}_{q,r,H} = 1\), and \(\underline{\textbf{c}}_{q,r,H}= 0\).
    \item \(\auxE_{9} := \{(\verAux(i,j,0),\verAux(\origsector_{i,j},1), \overline{\textbf{c}} = \underline{\textbf{c}} = \delta_{i,j,0}, \bar{\textbf{w}}= 0)|i\in\OptSet, j\in \OptsUAVNum_i\}\).
    \end{itemize}
 
\end{itemize}

{
\begin{remark}
    In the preceding construction, the capacity of any outgoing edge (resp. incoming edge) from a node which does not have an incoming edge (resp. outgoing edge), other than \(\verAux^\sps\) and \(\verAux^\dps\), is set to 0. 
\end{remark}}

\subsection{Mixed Binary Linear Program Formulation}
\label{ssec: MILP}
We concatenate the weight, upper capacity bound, and lower capacity bound of each edge as $\overline{\mathbf{W}} \in \mathbb{R}^{|\auxE|}$, $\mathbf{\overline{C}} \in \mathbb{Z}_+^{|\auxE|}$, and $\mathbf{\underline{C}} \in \mathbb{Z}_+^{|\auxE|}$, respectively. Define an incidence matrix of the graph $\auxG$ as $\incidence \in \{-1, 0, 1\}^{|\auxV| \times |\auxE|}$, where 
\begin{equation}
\incidence_{ij} = 
\begin{cases}
        1, & \text{if edge $j$ ends at vertex $i$,} \\ 
        -1, & \text{if edge $j$ starts from vertex $i$,} \\ 
        0, & \text{otherwise}.
    \end{cases}
\end{equation}

Defining a truncated incidence matrix $\incidence_{\star}$ obtained from $\incidence$ by removing rows corresponding to \(\verAux^\sps\) and \(\verAux^\dps\), we have the following optimization problem.

\begin{subequations}
\label{eq: ILP}
\begin{align}
&\max_{\substack{\mathbf{A}, \when}} ~ \overline{\mathbf{W}}^\top\mathbf{A}\label{eq: ILP_Obj}\\
&\ \text{s.t. } \ \ \incidence_\star \mathbf{A} = \mathbf{0} \label{eq: ILP_FlowBalance} \\
& \ \ \ \ \ \ \ \mathbf{\underline{C}}(\delta) \leq \mathbf{A} \leq \mathbf{\overline{C}}(\delta)\label{eq: ILP_CapacityConstraint} \\
& \ \ \ \ \ \ \ \sum_{\tau\in T_{i,j}^{\dep}} \when_{i, j, \tau} = 1, \forall~ i \in \OptSet, j \in \OptsUAVNum_i \label{eq: ILP_BinaryVariable}\\
& \ \ \ \ \ \ \ \when_{i, j, \tau} \in \{0, 1\}, \forall~ i \in \OptSet, j \in \OptsUAVNum_i , \tau\in T_{i,j}^{\dep}\label{eq: ILP_VariableDelta} \\
& \ \ \ \ \ \ \ \mathbf{A} \in \mathbb{Z}^{|\auxE|}_+\label{eq: ILP_VariableAllocation}\\
& \ \ \ \ \ \ \ \mathbf{A}_{q+1,r,t} \! \leq \! \mathbf{A}_{q,r,t}, \forall r\in \sector, t\in [\horizon], q\in [\capacity(r,t)-1] \label{eq: ILP_Increasingq_t}.
\end{align}
\end{subequations}
Here, \eqref{eq: ILP_FlowBalance} denotes the ``flow balance'' constraint at every node in \(\auxV\backslash\auxV_3\); \eqref{eq: ILP_CapacityConstraint} denotes the capacity constraints where we have explicitly denoted the dependence of constraints on \(\delta\) (cf. definitions of \(\auxE_4\) and \(\auxE_7\)); \eqref{eq: ILP_BinaryVariable} and \eqref{eq: ILP_VariableDelta} denote the constraint that each aircraft must be allocated exactly one route; \eqref{eq: ILP_VariableAllocation} denotes the integrality constriants; \eqref{eq: ILP_Increasingq_t} denotes additional constraints which require that edges in \(\auxE_{3,r,t}\) and \(\auxE_{8,r}\) are allocated in an increasing order. 

Next, we highlight the connection between the optimization problems  \eqref{eq: AllocationMultipleCapacity} and \eqref{eq: ILP}.
    
\begin{lemma}\label{lem: EquivalenceOfFeasibleSet}
    Given the values of \(\textbf{A}_e\) for \(e\in \auxE_3\cup\auxE_5\cup\auxE_8\) that satisfy the capacity constraints \eqref{eq: ILP_CapacityConstraint}, there exists a unique feasible solution \((\textbf{A},\delta)\) that satisfies \eqref{eq: ILP_FlowBalance}-\eqref{eq: ILP_Increasingq_t}. 
\end{lemma}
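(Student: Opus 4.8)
The plan is a forward‑substitution (``flow propagation'') argument: treat the values of $\mathbf{A}_e$ for $e\in\auxE_3\cup\auxE_5\cup\auxE_8$ as fixed data and show that the constraints \eqref{eq: ILP_FlowBalance}--\eqref{eq: ILP_Increasingq_t} \emph{force} a unique value for every remaining edge flow and every binary variable $\delta_{i,j,\tau}$. Uniqueness is then immediate, and existence follows once one checks that the forced assignment satisfies all of the constraints.

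First I would recover the binary variables and the ``rigid'' edges. By the definition of $\auxE_4$, the flow on $(\verAux^{\dep}(\origsector_{i,j},\tau),\verAux(i,j,\tau))$ equals $\delta_{i,j,\tau}$, and flow balance \eqref{eq: ILP_FlowBalance} at the route node $\verAux(i,j,\tau)$ — whose only outgoing edges are the $\auxE_5$ edges of the routes departing at time $\tau$ — forces $\delta_{i,j,\tau}=\sum_{k\in\UAVMenuNum_{i,j}:\,\depart_{i,j,k}=\tau}\mathbf{A}_{(\verAux(i,j,\tau),\verAux^{\arr}(\finalsector_{i,j,k},\arrive_{i,j,k}))}$, a known quantity; then \eqref{eq: ILP_BinaryVariable} forces $\delta_{i,j,0}=1-\sum_{\tau\in T^{\dep}_{i,j}\setminus\{0\}}\delta_{i,j,\tau}$. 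With $\delta$ known, the flows on $\auxE_4,\auxE_6,\auxE_7,\auxE_9$ are pinned down by their degenerate capacity windows in \eqref{eq: ILP_CapacityConstraint}. Next, flow balance at each arrival node $\verAux^{\arr}(r,t)$ — whose only outgoing edge is the $\auxE_1$ edge and whose incoming edges lie in $\auxE_5$ — forces the $\auxE_1$ flow. Finally, at each vertiport node $\verAux(r,t)$ the only incident edge whose flow is not yet determined is the outgoing $\auxE_2$ edge (the incoming $\auxE_1$ flow, the incoming $\auxE_3$ flow from $\verAux(r,t-1)$, the outgoing $\auxE_3$ or $\auxE_8$ flow, and — when $t=1$ — the incoming $\auxE_6$ and $\auxE_9$ flows are already fixed), so flow balance at $\verAux(r,t)$ forces the $\auxE_2$ flow. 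Every variable now has a unique value, establishing uniqueness.

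It then remains to check that this forced assignment is feasible, i.e., satisfies the constraints not used above: flow balance at the departure nodes $\verAux^{\dep}(r,t)$; the capacity bounds on the $\auxE_1,\auxE_2,\auxE_4,\auxE_6,\auxE_7,\auxE_9$ edges; integrality \eqref{eq: ILP_VariableAllocation} and the $\{0,1\}$-constraints \eqref{eq: ILP_VariableDelta} (the sum constraint \eqref{eq: ILP_BinaryVariable} holds by the definition of $\delta_{i,j,0}$); and \eqref{eq: ILP_Increasingq_t}, which only involves the given $\auxE_3,\auxE_8$ flows. I expect the crux — and the main obstacle — to be flow balance at $\verAux^{\dep}(r,t)$: letting the cumulative flow on the $\auxE_3$ edges into $\verAux(r,t)$ play the role of $\statesector(r,t,\cdot)$ (with the $t=1$ value telescoping, via $\auxE_6$ and $\auxE_9$, to $\bar S(r)$), substituting the forced expressions for the $\auxE_1$ and $\auxE_2$ flows shows the balance equation at $\verAux^{\dep}(r,t)$ is precisely the occupancy recursion \eqref{eq: StateEvolution}, so it holds whenever the input flows on $\auxE_3$ and $\auxE_5$ are mutually consistent — which is part of the hypothesis that the given flows satisfy \eqref{eq: ILP_CapacityConstraint}. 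The capacity and integrality checks are then routine, since each forced flow is a nonnegative integer combination of the (integral) given flows and of $\delta$, and the upper bounds $\ArrivalCap(r,t)$, $\DepartureCap(r,t)$, and $\bar S(r)-\sum_{i,j}\delta_{i,j,0}\mathbf{1}(\origsector_{i,j}=r)$ hold by construction of $\auxV_1$ and $\auxE_6$. I would present the write-up in this order: $\delta$ and the rigid edges, then $\auxE_1$, then $\auxE_2$, then the feasibility verification with the $\verAux^{\dep}$-balance computation front and center.
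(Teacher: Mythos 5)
Your propagation argument is essentially the paper's own proof: the paper likewise fixes the remaining flows by applying flow balance class by class of nodes (balance at $\verAux^{\arr}(r,t)$ recovers $\auxE_1$ from the given $\auxE_5$ values; balance at the route nodes $\verAux(i,j,\tau)$ recovers $\auxE_4$ and hence $\delta$; balance at $\verAux^{\dep}(r,t)$ and $\verAux(r,t)$ recovers $\auxE_2$), with the flows on $\auxE_6,\auxE_7,\auxE_9$ pinned by their degenerate capacity windows. Your ordering differs only cosmetically (you determine $\auxE_2$ at the vertiport nodes and treat the $\verAux^{\dep}$ balance as the feasibility check, the paper does the reverse), and your plan is, if anything, more explicit than the paper about the existence/verification half of the claim, which the paper compresses into the uniqueness-of-recovery argument.

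One caution: the clause ``which is part of the hypothesis that the given flows satisfy \eqref{eq: ILP_CapacityConstraint}'' overstates what that hypothesis gives. On $\auxE_3\cup\auxE_5\cup\auxE_8$ the constraint \eqref{eq: ILP_CapacityConstraint} is just the box bound $0\le\mathbf{A}_e\le 1$; it neither forces the occupancy recursion you need at $\verAux^{\dep}(r,t)$ and $\verAux(r,t)$, nor rules out, e.g., two routes of the same aircraft with the same departure time both carrying flow $1$, which would force $\delta_{i,j,\tau}=2$ and violate \eqref{eq: ILP_VariableDelta}. So existence cannot be extracted from the literal hypothesis; the consistency you flag as the crux is supplied only by how the lemma is invoked in Proposition \ref{prop: Connection}, where the given edge values are induced by a feasible allocation $x\in X$ (the paper's proof silently skips this verification as well). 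In your write-up, make that provenance explicit rather than attributing it to \eqref{eq: ILP_CapacityConstraint}.
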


\begin{proof}
See Appendix \ref{app: proof 1}.
\end{proof}

\begin{proposition}\label{prop: Connection}
    Suppose \((A^\dagger, \delta^\dagger)\) is an optimal solution to \eqref{eq: ILP}. Then \(\overline{\textbf{W}}^\top \textbf{A}^\dagger =  \max_{x\in X} \SW(x; B)\). 
    Additionally, using \(A^\dagger\) we can uniquely determine \(x^\dagger\in X\) such that \(x^\dagger\in \underset{x\in X}{\arg\max}~ \SW(x; B)\). 
\end{proposition}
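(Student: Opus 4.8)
The plan is to exhibit an objective-preserving correspondence between the feasible region of the MILP \eqref{eq: ILP} and the set of feasible allocations that assign exactly one route to every aircraft, and then to note that this restriction does not change the optimal value of \eqref{eq: AllocationMultipleCapacity}. The latter reduction holds because $\varnothing\in\UAVMenuNum_{i,j}$ models $a_{i,j}$ staying at $\origsector_{i,j}$ with departure time $0$: assigning $\varnothing$ to an aircraft left unassigned by some $x\in X$ alters neither sum in \eqref{eq: StateEvolution}, hence preserves $S(r,t,\cdot)$ and the constraints (C1)--(C3), while it can only increase $\SW(\cdot;B)$ since $b_{i,j,\varnothing}\ge 0$; therefore $\max_{x\in X}\SW(x;B)=\max\{\SW(x;B)\,:\,x\in X,\ \sum_{k}x_{i,j,k}=1\ \text{for all }i,j\}$, and it suffices to relate \eqref{eq: ILP} to this restricted problem.

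First I would build the forward map $\Phi$ sending such an allocation $x$ to a pair $(\mathbf{A},\delta)$: put $\delta_{i,j,\tau}=\sum_{k:\depart_{i,j,k}=\tau}x_{i,j,k}$ (a $0/1$ quantity by (C1)); route $x_{i,j,k}$ units on the $\auxE_5$ edge of route $k$; read the forced flows on $\auxE_4,\auxE_6,\auxE_7,\auxE_9$ off $\delta$; set the flows on $\auxE_1,\auxE_2$ equal to the arrival and departure counts at each vertiport--time pair (as forced by flow balance at $\verAux^{\arr}(r,t),\verAux^{\dep}(r,t)$); and on $\auxE_{3,r,t}$ (resp.\ $\auxE_{8,r}$) place the unique $0/1$ vector that is nonincreasing in $q$ and sums to $S(r,t,x)$ (resp.\ $S(r,H,x)$). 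Three checks then remain: (i) flow balance \eqref{eq: ILP_FlowBalance} at every node of $\auxV\setminus\auxV_3$ holds because, started from the injection $\bar S(r)$ along $\auxE_6$ and the parked-aircraft detour along $\auxE_7$ and $\auxE_9$, the successive $\auxE_3$ flows (and finally the $\auxE_8$ flows) reproduce precisely the recursion \eqref{eq: StateEvolution}; (ii) the bounds \eqref{eq: ILP_CapacityConstraint} encode (C2) on $\auxE_1$/$\auxE_2$ and (C3) on the $\capacity(r,t)$ parallel edges of $\auxE_3$/$\auxE_8$, the constraints \eqref{eq: ILP_BinaryVariable}--\eqref{eq: ILP_VariableDelta} encode (C1), and \eqref{eq: ILP_Increasingq_t} pins down the nonincreasing pattern; and (iii) $\overline{\mathbf{W}}^\top\mathbf{A}=\SW(x;B)$, with the valuation term collected on the active $\auxE_5$ and $\auxE_7$ edges (weights $\rho_i b_{i,j,k}$ and $\rho_i b_{i,j,\varnothing}$) and the congestion term recovered by telescoping $\sum_{q=1}^{S(r,t,x)}\big(\cpc_{r,t}(q)-\cpc_{r,t}(q-1)\big)=\cpc_{r,t}(S(r,t,x))$, using $\cpc_{r,t}(0)=0$.

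For the converse, given any feasible $(\mathbf{A},\delta)$ of \eqref{eq: ILP} I would define $\Psi(\mathbf{A})$ by reading the route of $a_{i,j}$ off the unique active $\auxE_5$ edge leaving $\verAux(i,j,\tau)$ when $\delta_{i,j,\tau}=1$ with $\tau\neq 0$ (flow balance at $\verAux(i,j,\tau)$ together with the unit capacities on $\auxE_5$ forces exactly one such edge to carry flow), and by setting $x_{i,j,\varnothing}=\delta_{i,j,0}$; Lemma \ref{lem: EquivalenceOfFeasibleSet} ensures $\Psi$ is well defined, and running the three checks in reverse yields $\Psi(\mathbf{A})\in X$ with $\SW(\Psi(\mathbf{A});B)=\overline{\mathbf{W}}^\top\mathbf{A}$. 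Putting the two directions together: from the $\Psi$ side, $\overline{\mathbf{W}}^\top\mathbf{A}^\dagger=\SW(\Psi(\mathbf{A}^\dagger);B)\le\max_{x\in X}\SW(x;B)$; from the $\Phi$ side, picking an optimal $x^\star$ (which we may take to assign one route per aircraft) gives $\max_{x\in X}\SW(x;B)=\SW(x^\star;B)=\overline{\mathbf{W}}^\top\mathbf{A}^\star\le\overline{\mathbf{W}}^\top\mathbf{A}^\dagger$, where $\mathbf{A}^\star$ is the flow component of $\Phi(x^\star)$ and the last inequality is optimality of $(\mathbf{A}^\dagger,\delta^\dagger)$ in \eqref{eq: ILP}. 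Hence the two optimal values coincide, and $x^\dagger:=\Psi(\mathbf{A}^\dagger)$ is the element of $X$, uniquely determined by $\mathbf{A}^\dagger$, attaining $\max_{x\in X}\SW(x;B)$.

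The step I expect to be the main obstacle is the node-by-node bookkeeping behind checks (i)--(iii): verifying that the flows prescribed on $\auxE_3$/$\auxE_8$ (together with the boundary injection $\auxE_6$, the parked-aircraft detour $\auxE_7,\auxE_9$, and the arrival/departure edges $\auxE_1$/$\auxE_2$) conserve flow exactly because they encode the state recursion \eqref{eq: StateEvolution}, and that telescoping the $\cpc_{r,t}$ increments along a nonincreasing $0/1$ vector reproduces the discrete-convex congestion cost appearing in \eqref{eq: SocWelfare}. For the uniqueness half of the claim, Lemma \ref{lem: EquivalenceOfFeasibleSet} does most of the work.
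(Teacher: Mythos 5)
Your proposal is correct and follows essentially the same route as the paper's proof: an objective-preserving two-way correspondence between feasible allocations and feasible solutions of \eqref{eq: ILP}, with the flows on $\auxE_3$/$\auxE_8$ set to nonincreasing $0/1$ vectors summing to $S(r,t,x)$, the congestion cost recovered by telescoping, the constraints (C1)--(C3) read off the flow-balance and capacity constraints, and Lemma \ref{lem: EquivalenceOfFeasibleSet} supplying the unique completion/readout. Your explicit preliminary reduction to allocations assigning exactly one route per aircraft (via the $\varnothing$ option and $b_{i,j,\varnothing}\ge 0$) is a small refinement that the paper leaves implicit, but it does not change the substance of the argument.
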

\begin{proof}First, we show that, for every \(x\in X\), there exists a unique \((\textbf{A}(x), \delta(x))\) satisfying \eqref{eq: ILP_FlowBalance}-\eqref{eq: ILP_Increasingq_t} and \(\overline{\textbf{W}}^\top\textbf{A}(x) = \SW(x;B).\) Indeed, we construct \((\textbf{A}(x),\delta(x))\) such that
\begin{itemize}
        \item[(i)] for every \(e\!\in \!\auxE_{5}\), where   \((\verAux({i,j,\depart_{i,j,k}}) , \verAux^\arr({\finalsector_{i,j,k},\arrive_{i,j,k}}))\in e\) for some \(i\in \OptSet, j\in \OptsUAVNum_i, k\in \UAVMenuNum_{i,j}\), it holds that \(\textbf{A}_e(x) = x_{i,j,k}\); 
        \item[(ii)] for every \(r\in \sector, t\in [\horizon]\), and \(q\in [\capacity(r,t)]\), it holds that \(\textbf{A}_e(x) = \mathbf{1}(q\leq S(r,t,x))\), where \(e\) is the \(q-\)th  edge in \(\auxE_{3,r,t}\cup\auxE_{8,r}.\)
    \end{itemize}
The above construction specifies the values of \(\textbf{A}_e(x)\) for \(e\in \auxE_3\cup\auxE_5\cup\auxE_8\). Additionally, by Lemma \ref{lem: EquivalenceOfFeasibleSet}, there exists a unique feasible solution \((\textbf{A}(x),\delta(x))\), and we get
\begin{align*}
&\overline{\mathbf{W}}^\top\mathbf{A}(x) = \sum_{e\in \auxE} \bar{\textbf{w}}_e \mathbf{A}_e(x) = \sum_{e\in \auxE_{3}\cup \auxE_{5} \cup \auxE_{7} \cup \auxE_{8}}\bar{\textbf{w}}_e \mathbf{A}_e(x),
\end{align*}
where the last equality holds because \(\bar{\textbf{w}}_e=0\) for \(e\in \auxE_{1}\cup \auxE_{2} \cup \auxE_{4} \cup \auxE_{6} \cup \auxE_{9}.\)

Then, we examine each term. First, observe the following.
\begin{align*}
    \sum_{e\in \auxE_5} \bar{\textbf{w}}_e \mathbf{A}_e(x) &= \sum_{i\in \OptSet}\sum_{j\in \OptsUAVNum_i}\sum_{k\in \UAVMenuNum_{i,j}} \rho_i b_{i,j,k}x_{i,j,k}.
\\
    \sum_{e\in \auxE_7} \bar{\textbf{w}}_e \mathbf{A}_e(x) &= \sum_{i\in \OptSet}\sum_{j\in \OptsUAVNum_i} \rho_i b_{i,j,\phi}x_{i,j,0}.
\end{align*}
Next, we use the definition of weights in \(\auxE_{3,r,t}\).
\begin{align*}
    &\sum_{e\in \auxE_3} \bar{\textbf{w}}_e \mathbf{A}_e(x)  =\sum_{r\in \sector} \sum_{t=1}^{\horizon-1} \sum_{e\in \auxE_{3,r,t}} \bar{\textbf{w}}_{e} \mathbf{A}_{e}(x) \\&=\sum_{r\in \sector} \sum_{t=1}^{\horizon-1} \sum_{q=1}^{\capacity(r,t)} \bar{\textbf{w}}_{q,r,t} \mathbf{A}_{q,r,t}(x)\\&=-\lambda\sum_{r\in \sector} \sum_{t=1}^{\horizon-1} \sum_{q=1}^{S(r,t,x)} \left(\cpc_r(q)-\cpc_r(q-1)\right) \\&= -\lambda\sum_{r\in \sector} \sum_{t=1}^{\horizon-1} \cpc_r(S(r,t,x)).
\end{align*}
Similarly, we get $\sum_{e\in \auxE_8} \bar{\textbf{w}}_e \mathbf{A}_e(x) = -\lambda\sum_{r\in \sector} \cpc_r(S(r,H,x))$.

To summarize, we obtain   
\begin{align}\label{eq: xToA}
\overline{\textbf{W}}^\top\textbf{A}(x) = \SW(x;B). 
\end{align}
Using this, we conclude that  
\begin{equation}\label{eq: Forward}
\begin{aligned}
    &\max_{x\in X}\SW(x;B) = \max_{x\in X}\overline{\textbf{W}}^\top\textbf{A}(x) \\& \leq  \max_{(\textbf{A},\delta) ~\text{s.t.}~ \eqref{eq: ILP_FlowBalance}-\eqref{eq: ILP_Increasingq_t}}\overline{\textbf{W}}^\top\textbf{A}= \overline{\textbf{W}}^\top\textbf{A}^\dagger.
\end{aligned}
\end{equation}

Next, we show that for every \((\textbf{A}, \delta)\) satisfying \eqref{eq: ILP_FlowBalance}-\eqref{eq: ILP_Increasingq_t}, there exists \(x(\textbf{A},\delta)\in X\) such that \(\SW(x(\textbf{A},\delta)) = \overline{\textbf{W}}^\top \textbf{A}\). Indeed, we construct \(x(\textbf{A},\delta)\) such that for every \(i\in \OptSet, j\in \OptsUAVNum_i, k\in \UAVMenuNum_{i,j}\) it holds that \(x_{i,j,k} = \textbf{A}_e\) for \(e\in \auxE_5\) such that \((\verAux({i,j,\depart_{i,j,k}}), \verAux^\arr({\finalsector_{i,j,k}}, \arrive_{i,j,k}))\in e\) or \(e\in \auxE_9\) such that \((\verAux({i,j,0}), \verAux({\finalsector_{i,j,k}}, 1)) \in e\). Note that due to capacity constraints on these edges, \(x_{i,j,k}\in\{0,1\}\). Additionally, the flow balance at the nodes of the form \(\verAux(i,j,\tau)\), for some \(i\in \OptSet, j\in\OptsUAVNum_i, \tau\in T_{i,j}^\dep\),  ensures that 
\begin{align*}
&\delta_{i,j,\tau} = \!\!\!\!\sum\limits_{k\in \UAVMenuNum_{i,j}}\sum\limits_{e\in \auxE_9}\!\!\!\textbf{A}_{e}\mathbf{1}((\verAux({i,j,0}) , \verAux({\finalsector_{i,j,k},1}))\!\in\! e, \tau= 0)\\
&+\!\!\!\!\sum\limits_{k\in \UAVMenuNum_{i,j}}\sum\limits_{e\in \auxE_5}\!\!\!\!\textbf{A}_{e}\mathbf{1}((\verAux({i,j,\depart_{i,j,k}}) , \verAux^\arr({\finalsector_{i,j,k},\arrive_{i,j,k}}))\!\in\! e, \tau\!=\!\!\depart_{i,j,k}).
\end{align*}
Summing over $\tau$, we get   
\begin{align*}
    &\sum_{\tau\in T_{i,j}^\dep}\!\!\delta_{i,j,\tau} =\!\! \sum_{k\in \UAVMenuNum_{i,j}}\sum_{e\in \auxE_9}\textbf{A}_{e}\mathbf{1}((\verAux({i,j,0}) , \verAux({\finalsector_{i,j,k}, 1}))\in e)\\
    &\quad + \sum_{k\in \UAVMenuNum_{i,j}}\sum_{e\in \auxE_5}\textbf{A}_{e}\mathbf{1}((\verAux({i,j,\depart_{i,j,k}}) , \verAux^\arr({\finalsector_{i,j,k},\arrive_{i,j,k}}))\in e) \\
    &\quad \quad\quad \quad\ = \sum_{k\in\UAVMenuNum_{i,j}}x_{i,j,k}(\textbf{A},\delta).
\end{align*}
Using \eqref{eq: ILP_BinaryVariable}, we conclude that \(\sum_{k\in \UAVMenuNum_{i,j}} x_{i,j,k}(\textbf{A},\delta)= 1\). 

Next, we use  the flow balance at nodes of the form \(\verAux^\arr(r,t)\), for every \(r\in \sector, t\in [\horizon]\), to ensure that 
\begin{align*}
\sum\limits_{\substack{i\in\OptSet, j\in\OptsUAVNum_i,\\ k\in \UAVMenuNum_{i,j},e\in \auxE_5}} \!\!\!\!\!\!\!\!\!\!\!\!\textbf{A}_{e}\mathbf{1}((\verAux({i,j,\depart_{i,j,k}}) , \verAux^\arr({\finalsector_{i,j,k}, \arrive_{i,j,k}}))\!\!\in\!\! e, \finalsector_{i, j, k} \!= \!r, \arrive_{i, j, k} \!=\! t) \leq \arr(r, t).
\end{align*}
By \(\sum_{e\in \auxE_5}\textbf{A}_e\mathbf{1}((\verAux({i,j,\depart_{i,j,k}}) , \verAux^\arr({\finalsector_{i,j,k}, \arrive_{i, j, k}}))\!\!\in\!\! e) \!\!=\!\! x_{i,j,k}(\textbf{A},\delta)\), we get \(\sum_{i \in \OptSet } \sum_{j \in \OptsUAVNum_i} \sum_{k \in \UAVMenuNum_{i, j}} \allocation_{i, j, k} \mathbf{1}(\finalsector_{i, j, k} = r, \arrive_{i, j, k} = t) \!\!\leq\!\! \ArrivalCap(r,t)\)\footnote{When $t=1$, the arrival capacity constraints are trivially satisfied since there is no incoming aircraft.}. Analogously, the flow balance equations at the nodes of the form \(\verAux^\dep(r,t)\), for some \(r\in \sector, t\in [\horizon]\), ensure that \(\sum_{i\in\OptSet}\sum_{j\in \OptsUAVNum_i} \sum_{k \in \UAVMenuNum_{i, j}} \allocation_{i, j, k} \mathbf{1}(\origsector_{i, j} = r, \depart_{i, j, k} = t) \leq \DepartureCap(r, t).\) 
Finally, we can establish \(S(r,t,x(\textbf{A},\delta)) = \sum_{q=1}^{\capacity(r,t)}\textbf{A}_{q,r,t} \) through the flow balance equation at \(\verAux(r,t)\) and \eqref{eq: StateEvolution}. Since \(\sum_{q=1}^{\capacity(r,t)}\textbf{A}_{q,r,t}\leq \capacity(r,t)\), due to the capacity constraints on the edge \(\auxE_{3,r,t}\), it holds that   \(S(r,t,x(\textbf{A},\delta)) \leq \capacity(r,t)\). Thus, we conclude that \(x(\textbf{A},\delta)\in X\). 

Additionally, using the analysis to show \eqref{eq: xToA} in the backward direction and the construction of \(x(\textbf{A},\delta)\), we can establish that \(\SW(x(\textbf{A},\delta))= \overline{W}^\top \textbf{A}\). Thus, 
we conclude that  
\begin{equation}
\begin{aligned}\label{eq: reverse}
&\overline{\textbf{W}}^\top\textbf{A}^\dagger=  \max_{(\textbf{A},\delta) ~\text{s.t.}~ \eqref{eq: ILP_FlowBalance}-\eqref{eq: ILP_Increasingq_t}}\overline{\textbf{W}}^\top\textbf{A} \\&= \max_{(\textbf{A},\delta) ~\text{s.t.}~ \eqref{eq: ILP_FlowBalance}-\eqref{eq: ILP_Increasingq_t}}\SW(x(\textbf{A},\delta)) \leq \max_{x\in X} \SW(x). 
\end{aligned}
\end{equation}

By \eqref{eq: Forward} and \eqref{eq: reverse}, we get \(\overline{\textbf{W}}^\top\textbf{A}^\dagger=  \max_{x\in X} \SW(x)\).
\end{proof}

\subsection{Reduction to Mixed Binary Linear Program}
\label{ssec: MBLP}
Instead of solving \eqref{eq: ILP}, we can obtain \((\textbf{A}^\dagger, \delta^\dagger)\) by solving the following MBLP. We establish this fact in Proposition \ref{prop: MBLP}.
\begin{subequations}
\label{eq: ILPNew}
\begin{align}
&\max_{\substack{\mathbf{A}, \when}} ~ \overline{\mathbf{W}}^\top\mathbf{A}\label{eq: ILP_Obj2}\\
&\ \text{s.t. } \eqref{eq: ILP_FlowBalance}-\eqref{eq: ILP_VariableDelta} \\
& \ \ \ \ \ \ \ \mathbf{A} \in \mathbb{R}^{|\auxE|}_+\label{eq: ILP_VariableAllocation2}.
\end{align}
\end{subequations}

\begin{proposition}\label{prop: MBLP}
    The optimal values of \eqref{eq: ILP} and \eqref{eq: ILPNew} are equal.
\end{proposition}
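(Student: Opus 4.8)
The plan is to show the two optimization problems have the same optimal value by arguing that relaxing the integrality constraint \eqref{eq: ILP_VariableAllocation} on $\mathbf{A}$ does not change the optimal value, because the constraint matrix governing $\mathbf{A}$ (for any fixed admissible $\delta$) is totally unimodular and the capacity bounds are integral. First I would observe that \eqref{eq: ILPNew} is a relaxation of \eqref{eq: ILP}: every feasible point of \eqref{eq: ILP} is feasible for \eqref{eq: ILPNew} since $\mathbb{Z}_+^{|\auxE|}\subseteq\mathbb{R}_+^{|\auxE|}$ and the remaining constraints \eqref{eq: ILP_FlowBalance}--\eqref{eq: ILP_VariableDelta} are identical; hence the optimal value of \eqref{eq: ILPNew} is at least that of \eqref{eq: ILP}. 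The real content is the reverse inequality.

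For the reverse direction, fix an optimal solution $(\mathbf{A}^\ast,\delta^\ast)$ of \eqref{eq: ILPNew}; note $\delta^\ast\in\{0,1\}^{\cdot}$ by \eqref{eq: ILP_VariableDelta}, so $\delta^\ast$ is already an admissible binary vector. With $\delta^\ast$ frozen, $\mathbf{A}^\ast$ is an optimal solution of the linear program $\max\{\overline{\mathbf{W}}^\top\mathbf{A}: \incidence_\star\mathbf{A}=\mathbf{0},\ \mathbf{\underline{C}}(\delta^\ast)\le\mathbf{A}\le\mathbf{\overline{C}}(\delta^\ast),\ \mathbf{A}\ge 0\}$, together with the ordering inequalities \eqref{eq: ILP_Increasingq_t}. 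The key step is to invoke total unimodularity: $\incidence_\star$ is a submatrix of a graph incidence matrix, and incidence matrices of directed graphs are totally unimodular; the additional constraints \eqref{eq: ILP_Increasingq_t} are difference constraints $\mathbf{A}_{q+1,r,t}-\mathbf{A}_{q,r,t}\le 0$ whose coefficient rows are differences of unit vectors, and appending such rows to a network-flow incidence matrix preserves total unimodularity (this is the classical interval/difference-constraint extension, or one can absorb \eqref{eq: ILP_Increasingq_t} into the flow structure directly as the parallel-edge ordering). Since $\mathbf{\underline{C}}(\delta^\ast)$ and $\mathbf{\overline{C}}(\delta^\ast)$ are integral (the $\delta$-dependent capacities in $\auxE_4,\auxE_6,\auxE_7,\auxE_9$ take values in $\{0,1\}$ or are integer combinations thereof, and all other capacities are in $\mathbb{Z}_+$), the polyhedron of feasible $\mathbf{A}$ is integral, so the LP attains its optimum at an integral vertex $\mathbf{A}^{\ast\ast}\in\mathbb{Z}_+^{|\auxE|}$ with $\overline{\mathbf{W}}^\top\mathbf{A}^{\ast\ast}=\overline{\mathbf{W}}^\top\mathbf{A}^\ast$. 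Then $(\mathbf{A}^{\ast\ast},\delta^\ast)$ is feasible for \eqref{eq: ILP} and achieves the optimal value of \eqref{eq: ILPNew}, giving the reverse inequality.

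Alternatively — and perhaps more cleanly given what is already available — I would lean on Lemma \ref{lem: EquivalenceOfFeasibleSet}: once the values of $\mathbf{A}_e$ on $\auxE_3\cup\auxE_5\cup\auxE_8$ are fixed to values satisfying the capacity constraints, the entire solution $(\mathbf{A},\delta)$ is uniquely determined, and on those edges the capacities force $\mathbf{A}_e\in\{0,1\}$; the uniquely propagated values on the remaining edges ($\auxE_1,\auxE_2,\auxE_6$, etc.) are then automatically integral because they are sums of the already-integral $\{0,1\}$ flows dictated by flow balance. Thus any feasible $\mathbf{A}$ of \eqref{eq: ILPNew} is in fact integral, so \eqref{eq: ILP} and \eqref{eq: ILPNew} have identical feasible regions (in $\mathbf{A}$) and hence equal optimal values. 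The main obstacle in either route is the bookkeeping around the $\delta$-dependent capacities in \eqref{eq: ILP_CapacityConstraint}: one must check that for the optimal binary $\delta^\ast$ the bounds $\mathbf{\underline{C}}(\delta^\ast)=\mathbf{\overline{C}}(\delta^\ast)$ on $\auxE_4,\auxE_7,\auxE_9$ and the subtracted term in $\auxE_6$ keep everything integral and consistent — i.e., that fixing $\delta^\ast$ genuinely reduces \eqref{eq: ILPNew} to a pure min-cost-flow-type LP — after which total unimodularity (or Lemma \ref{lem: EquivalenceOfFeasibleSet}) does the rest. I would present the Lemma-\ref{lem: EquivalenceOfFeasibleSet}-based argument as the primary proof and mention the total unimodularity viewpoint as the structural reason it works.
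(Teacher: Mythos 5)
Your easy direction (that \eqref{eq: ILPNew} is a relaxation of \eqref{eq: ILP}) is fine, and invoking total unimodularity of $\incidence_\star$ with integral bounds for a fixed binary $\delta$ is indeed the tool the paper uses. However, both of your routes miss the real content of the proposition: \eqref{eq: ILPNew} drops \emph{two} constraints, the integrality condition \eqref{eq: ILP_VariableAllocation} \emph{and} the ordering constraints \eqref{eq: ILP_Increasingq_t}, and you never show that dropping the latter is harmless. In your first route, the integral vertex $\mathbf{A}^{\ast\ast}$ extracted from the LP with $\delta^\ast$ frozen need not satisfy \eqref{eq: ILP_Increasingq_t}, so $(\mathbf{A}^{\ast\ast},\delta^\ast)$ need not be feasible for \eqref{eq: ILP}; and your fallback claim that appending the difference rows $\mathbf{A}_{q+1,r,t}-\mathbf{A}_{q,r,t}\le 0$ to the incidence matrix preserves total unimodularity is false in this graph: two parallel edges of $\auxE_{3,r,t}$ have identical incidence columns, and the $2\times 2$ submatrix formed by the incidence row of their head vertex, $(+1,+1)$, together with the difference row $(-1,+1)$, has determinant $2$. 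The missing idea is precisely the paper's first step: since $\lambda\ge 0$ and $\cpc_{r,t}(\cdot)$ is discrete convex, the weights $\bar{\textbf{w}}_{q,r,t}$ on the parallel edges in $\auxE_{3,r,t}$ and $\auxE_{8,r}$ are non-increasing in $q$, so whenever an optimal solution violates the ordering one can swap the values on the $q$-th and $(q+1)$-th edges without decreasing the objective; hence \eqref{eq: ILP_Increasingq_t} can be dropped from \eqref{eq: ILP} without changing its optimal value, and only after that does the total-unimodularity argument (applied for each fixed feasible $\delta$) close the gap with \eqref{eq: ILPNew}. Nothing in your write-up uses discrete convexity, so this step is genuinely absent.

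Your second route, which you propose to present as the primary proof, does not work. In \eqref{eq: ILPNew} the variable $\mathbf{A}$ ranges over $\mathbb{R}_+^{|\auxE|}$, so the capacity bounds on $\auxE_3\cup\auxE_5\cup\auxE_8$ only force $\mathbf{A}_e\in[0,1]$, not $\mathbf{A}_e\in\{0,1\}$; feasible points of \eqref{eq: ILPNew} are in general fractional, so the claim that the two programs have identical feasible regions is false (and even an integral feasible point of \eqref{eq: ILPNew} may violate \eqref{eq: ILP_Increasingq_t}). Lemma \ref{lem: EquivalenceOfFeasibleSet} only says that prescribed values on $\auxE_3\cup\auxE_5\cup\auxE_8$ propagate uniquely through flow balance; it cannot be used to conclude integrality of arbitrary feasible points of the relaxation. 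What is true, and what the paper proves, is that the \emph{optimal value} is unchanged, via the convexity-based swap argument followed by integrality of the LP optimum for fixed $\delta$.
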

\begin{proof}
First, we prove that we can drop \eqref{eq: ILP_Increasingq_t} when solving \eqref{eq: ILP}. Suppose there exists  \(r\in \sector, t\in [\horizon]\), \(q\in [\capacity(r,t)-1]\) such that  \(\textbf{A}^\dagger_{q,r,t}  < \textbf{A}^\dagger_{q+1,r,t}\). By swapping the value of \(\textbf{A}^\dagger_{q+1,r,t}\) with that of \(\textbf{A}^\dagger_{q,r,t}\), we get a new feasible allocation with a weakly higher objective value. This is because \( \bar{\textbf{w}}_{q+1,r,t} \leq \bar{\textbf{w}}_{q,r,t}\) as \(\lambda \geq 0\) and \(\cpc_{r, t}(\cdot)\) is discrete convex. 
Then, for any feasible value of \(\delta\), the optimization problem \eqref{eq: ILP} is an integer linear program where the constraint matrix $\incidence_\star$ satisfies total unimodularity, so it is guaranteed to have an integral solution \cite[Chapter 19]{10.5555/17634}. 
\end{proof}

{For any fixed values of binary variables \((\delta_{i,j,\tau})_{i\in\OptSet, j\in \OptsUAVNum_i, \tau\in T_{i,j}^\dep}\), the optimization problem \eqref{eq: ILP} is a maximum-weight flow problem. Thus, one can enumerate all the departure time combinations, and solve each maximum-weight flow problem with the number of scenarios being $\prod_{i\in\OptSet, j\in\OptsUAVNum_i} |\{\depart_{i, j, k}| k \in \UAVMenuNum_{i, j}\}|$. The complete problem can be solved efficiency using the above MBLP approach, which will provide speed-up due to some techniques implemented in commercial solvers such as branch and bound, cutting-plane methods, etc.}

\section{Discussions}
We show how the proposed mechanism generalizes existing works in Subsection \ref{ssec: UnitCapacity} and present some extensions in Subsection \ref{Sec: speed-up}.

\subsection{Connections to Existing Mechanisms}
\label{ssec: UnitCapacity}
We consider \(H=1\), $\ArrivalCap(r, 1) = \infty$, $\DepartureCap(r, 1) = \infty$, $\forall r \in \sector$, and $|A_i| = 1$, $\forall i \in \OptSet$.
\begin{itemize}
    \item[(i)] \textbf{Air Traffic Protocol:} When we treat each vertiport $r \in \sector$ as a sector with $\capacity(r, 1)$ being the sector capacity, our model generalizes the problem studied in \cite{balakrishnan2022cost}, where the authors did not consider arrival and departure capacities and assumed single-aircraft FOs.
    \item[(ii)] \textbf{Airport Time Slot Auction:} When we treat each vertiport $r \in \sector$ as a time slot with $\capacity(r, 1)$ being the slot capacity, our model subsumes the framework in \cite{DIXIT2023102971}. Therefore, our formulation becomes a two-sided matching problem as detailed in \cite{DIXIT2023102971} and is subject to a faster strongly polynomial-time algorithm.
\end{itemize}

\subsection{Extensions of the Proposed Mechanism}
\label{Sec: speed-up}

\begin{itemize}
    \item[(i)] {\textbf{Arrival, Departure, and Airborne Congestion:} To consider congestion due to arriving and departing aircraft, we can apply the same technique in $\auxE_3$ to $\auxE_1$ and $\auxE_2$ by constructing corresponding edge weights. To consider airborne congestion, we treat waypoints in the airspace as vertiports and setting corresponding capacities and congestion costs.}
    \item[(ii)] \textbf{External Demand:} Aircraft that are not available in the service area of the SP at \(t=0\) can be incorporated in our framework by setting $\origsector_{i, j} = \outside$ and $\depart_{i, j, k} = 0, \forall k \in \UAVMenuNum_{i, j}$\footnote{In this case, $\origsector_{i, j}$ and $\depart_{i, j, k}$ do not affect our analysis, so we can set them arbitrarily.}. 
    \item[(iii)] \textbf{Entire Trajectory:} We can extend each route to an entire trajectory with multiple vertiport-time pairs. By setting a binary variable for each route and combining those variables when two routes only differ in one time slot, we can apply the same MBLP approach.
    \item[(iv)] \textbf{Cancellation Policy:} It is possible to cancel or re-allocate some of the previously scheduled flights due to changing vertiport capacities or newly emerging aircraft. While there is no single re-allocation policy, it is typical to consider three aspects: congestion, efficiency, and fairness, where we cancel flights from congested vertiports, with low valuations, or at random, respectively.
\end{itemize}
\section{Conclusion}
In this work, we propose an auction mechanism to incentivize fleet operators to report their valuations truthfully and consequently perform a socially optimal allocation of vertiport access. This approach adapts the popular Vickrey–Clarke–Groves mechanism while considering the egalitarian, congestion-aware, and computational issues. The proposed framework could be of interest beyond air traffic management, such as multi-robot coordination. Code associated with this paper can be found at \url{https://github.com/victoria-tuck/IC-vertiport-reservation}; in a follow-up paper, we shall provide a numerical analysis of the mechanism's performance.
Several intriguing avenues exist for future research. First, we would like to extend the auction mechanism to include waypoints in airspace, thus moving toward a more complete air traffic flow management formulation. Second, {a careful analysis of the effect of flight operator weights in our proportional fairness metric is needed.} Also, different fairness notions have been considered in airspace and other areas, such as reversals, takeovers, and priority guarantees \cite{doi:10.1287/trsc.2014.0567, 9965945}. It is interesting to compare different formulations, both theoretically and empirically.

\bibliography{refs}

\begin{thebibliography}{37}
\providecommand{\natexlab}[1]{#1}
\providecommand{\url}[1]{\texttt{#1}}
\expandafter\ifx\csname urlstyle\endcsname\relax
  \providecommand{\doi}[1]{doi: #1}\else
  \providecommand{\doi}{doi: \begingroup \urlstyle{rm}\Url}\fi

\bibitem[Administration(2023)]{faa}
Federal~Aviation Administration.
\newblock Urban air mobility concept of operations 2.0.
\newblock Technical report, Federal Aviation Administration, 2023.
\newblock URL \url{https://www.faa.gov/air-taxis/uam_blueprint}.

\bibitem[Ball et~al.(2018)Ball, Berardino, and Hansen]{BALL2018186}
Michael~O. Ball, Frank Berardino, and Mark Hansen.
\newblock The use of auctions for allocating airport access rights.
\newblock \emph{Transportation Research Part A: Policy and Practice}, 114:\penalty0 186--202, 2018.
\newblock ISSN 0965-8564.
\newblock \doi{https://doi.org/10.1016/j.tra.2017.09.026}.
\newblock URL \url{https://www.sciencedirect.com/science/article/pii/S0965856416303287}.

\bibitem[Ball et~al.(2020)Ball, Estes, Hansen, and Liu]{doi:10.1287/trsc.2020.0995}
Michael~O. Ball, Alexander~S. Estes, Mark Hansen, and Yulin Liu.
\newblock Quantity-contingent auctions and allocation of airport slots.
\newblock \emph{Transportation Science}, 54\penalty0 (4):\penalty0 858--881, 2020.
\newblock \doi{10.1287/trsc.2020.0995}.
\newblock URL \url{https://doi.org/10.1287/trsc.2020.0995}.

\bibitem[Basso and Zhang(2010)]{BASSO2010381}
Leonardo~J. Basso and Anming Zhang.
\newblock Pricing vs. slot policies when airport profits matter.
\newblock \emph{Transportation Research Part B: Methodological}, 44\penalty0 (3):\penalty0 381--391, 2010.
\newblock ISSN 0191-2615.
\newblock \doi{https://doi.org/10.1016/j.trb.2009.09.005}.
\newblock URL \url{https://www.sciencedirect.com/science/article/pii/S0191261509001179}.
\newblock Economic Analysis of Airport Congestion.

\bibitem[Bertram and Wei()]{doi:10.2514/6.2020-0660}
Josh Bertram and Peng Wei.
\newblock \emph{An Efficient Algorithm for Self-Organized Terminal Arrival in Urban Air Mobility}.
\newblock \doi{10.2514/6.2020-0660}.
\newblock URL \url{https://arc.aiaa.org/doi/abs/10.2514/6.2020-0660}.

\bibitem[Bertsimas and Gupta(2016)]{doi:10.1287/trsc.2014.0567}
Dimitris Bertsimas and Shubham Gupta.
\newblock Fairness and collaboration in network air traffic flow management: An optimization approach.
\newblock \emph{Transportation Science}, 50\penalty0 (1):\penalty0 57--76, 2016.
\newblock \doi{10.1287/trsc.2014.0567}.
\newblock URL \url{https://doi.org/10.1287/trsc.2014.0567}.

\bibitem[Bertsimas and Patterson(1998)]{doi:10.1287/opre.46.3.406}
Dimitris Bertsimas and Sarah~Stock Patterson.
\newblock The air traffic flow management problem with enroute capacities.
\newblock \emph{Operations Research}, 46\penalty0 (3):\penalty0 406--422, 1998.
\newblock \doi{10.1287/opre.46.3.406}.
\newblock URL \url{https://doi.org/10.1287/opre.46.3.406}.

\bibitem[Bertsimas and Patterson(2000)]{doi:10.1287/trsc.34.3.239.12300}
Dimitris Bertsimas and Sarah~Stock Patterson.
\newblock The traffic flow management rerouting problem in air traffic control: A dynamic network flow approach.
\newblock \emph{Transportation Science}, 34\penalty0 (3):\penalty0 239--255, 2000.
\newblock \doi{10.1287/trsc.34.3.239.12300}.
\newblock URL \url{https://doi.org/10.1287/trsc.34.3.239.12300}.

\bibitem[Bertsimas et~al.(2011)Bertsimas, Lulli, and Odoni]{doi:10.1287/opre.1100.0899}
Dimitris Bertsimas, Guglielmo Lulli, and Amedeo Odoni.
\newblock An integer optimization approach to large-scale air traffic flow management.
\newblock \emph{Operations Research}, 59\penalty0 (1):\penalty0 211--227, 2011.
\newblock \doi{10.1287/opre.1100.0899}.
\newblock URL \url{https://doi.org/10.1287/opre.1100.0899}.

\bibitem[Bichler et~al.(2023)Bichler, Gritzmann, Karaenke, and Ritter]{bichler2023airport}
Martin Bichler, Peter Gritzmann, Paul Karaenke, and Michael Ritter.
\newblock On airport time slot auctions: A market design complying with the iata scheduling guidelines.
\newblock \emph{Transportation Science}, 57\penalty0 (1):\penalty0 27--51, 2023.

\bibitem[Carlin and Park(1970)]{6bec411b-7eef-3480-b725-7f8cb05b9529}
Alan Carlin and R.~E. Park.
\newblock Marginal cost pricing of airport runway capacity.
\newblock \emph{The American Economic Review}, 60\penalty0 (3):\penalty0 310--319, 1970.
\newblock ISSN 00028282.
\newblock URL \url{http://www.jstor.org/stable/1817981}.

\bibitem[Chin et~al.(2021)Chin, Gopalakrishnan, Egorov, Evans, and Balakrishnan]{9339882}
Christopher Chin, Karthik Gopalakrishnan, Maxim Egorov, Antony Evans, and Hamsa Balakrishnan.
\newblock Efficiency and fairness in unmanned air traffic flow management.
\newblock \emph{IEEE Transactions on Intelligent Transportation Systems}, 22\penalty0 (9):\penalty0 5939--5951, 2021.
\newblock \doi{10.1109/TITS.2020.3048356}.

\bibitem[Chin et~al.(2023{\natexlab{a}})Chin, Gopalakrishnan, Balakrishnan, Egorov, and Evans]{chin2023protocol}
Christopher Chin, Karthik Gopalakrishnan, Hamsa Balakrishnan, Maxim Egorov, and Antony Evans.
\newblock Protocol-based congestion management for advanced air mobility.
\newblock \emph{Journal of Air Transportation}, 31\penalty0 (1):\penalty0 35--44, 2023{\natexlab{a}}.

\bibitem[Chin et~al.(2023{\natexlab{b}})Chin, Qin, Gopalakrishnan, and Balakrishnan]{chin2023traffic}
Christopher Chin, Victor Qin, Karthik Gopalakrishnan, and Hamsa Balakrishnan.
\newblock Traffic management protocols for advanced air mobility.
\newblock \emph{Frontiers in Aerospace Engineering}, 2:\penalty0 1176969, 2023{\natexlab{b}}.

\bibitem[Cohen et~al.(2021)Cohen, Shaheen, and Farrar]{cohen2021urban}
Adam~P Cohen, Susan~A Shaheen, and Emily~M Farrar.
\newblock Urban air mobility: History, ecosystem, market potential, and challenges.
\newblock \emph{IEEE Transactions on Intelligent Transportation Systems}, 22\penalty0 (9):\penalty0 6074--6087, 2021.

\bibitem[Dixit et~al.(2023)Dixit, Shakya, Jakhar, and Nath]{DIXIT2023102971}
Aasheesh~Kumar Dixit, Garima Shakya, Suresh~Kumar Jakhar, and Swaprava Nath.
\newblock Algorithmic mechanism design for egalitarian and congestion-aware airport slot allocation.
\newblock \emph{Transportation Research Part E: Logistics and Transportation Review}, 169:\penalty0 102971, 2023.
\newblock ISSN 1366-5545.
\newblock \doi{https://doi.org/10.1016/j.tre.2022.102971}.
\newblock URL \url{https://www.sciencedirect.com/science/article/pii/S1366554522003489}.

\bibitem[Evans et~al.()Evans, Egorov, and Munn]{doi:10.2514/6.2020-2203}
Antony~D. Evans, Maxim Egorov, and Steven Munn.
\newblock \emph{Fairness in Decentralized Strategic Deconfliction in UTM}.
\newblock \doi{10.2514/6.2020-2203}.
\newblock URL \url{https://arc.aiaa.org/doi/abs/10.2514/6.2020-2203}.

\bibitem[Guerreiro et~al.()Guerreiro, Hagen, Maddalon, and Butler]{doi:10.2514/6.2020-2903}
Nelson~M. Guerreiro, George~E. Hagen, Jeffrey~M. Maddalon, and Ricky~W. Butler.
\newblock \emph{Capacity and Throughput of Urban Air Mobility Vertiports with a First-Come, First-Served Vertiport Scheduling Algorithm}.
\newblock \doi{10.2514/6.2020-2903}.
\newblock URL \url{https://arc.aiaa.org/doi/abs/10.2514/6.2020-2903}.

\bibitem[Kleinbekman et~al.(2018)Kleinbekman, Mitici, and Wei]{8569645}
Imke~C. Kleinbekman, Mihaela~A. Mitici, and Peng Wei.
\newblock evtol arrival sequencing and scheduling for on-demand urban air mobility.
\newblock In \emph{2018 IEEE/AIAA 37th Digital Avionics Systems Conference (DASC)}, pages 1--7, 2018.
\newblock \doi{10.1109/DASC.2018.8569645}.

\bibitem[Mehta and Vazirani(2020)]{MEHTA202041}
Ruta Mehta and Vijay~V. Vazirani.
\newblock An incentive compatible, efficient market for air traffic flow management.
\newblock \emph{Theoretical Computer Science}, 818:\penalty0 41--50, 2020.
\newblock ISSN 0304-3975.
\newblock \doi{https://doi.org/10.1016/j.tcs.2018.09.006}.
\newblock URL \url{https://www.sciencedirect.com/science/article/pii/S0304397518305711}.
\newblock Computing and Combinatorics.

\bibitem[Murota(2015)]{dca}
Kazuo Murota.
\newblock Discrete convex analysis.
\newblock \emph{Hausdorff Institute of Mathematics, Summer School (September 21–25, 2015)}, 2015.

\bibitem[Myerson(1981)]{doi:10.1287/moor.6.1.58}
Roger~B. Myerson.
\newblock Optimal auction design.
\newblock \emph{Mathematics of Operations Research}, 6\penalty0 (1):\penalty0 58--73, 1981.
\newblock \doi{10.1287/moor.6.1.58}.
\newblock URL \url{https://doi.org/10.1287/moor.6.1.58}.

\bibitem[Nisan et~al.(2007)Nisan, Roughgarden, Tardos, and Vazirani]{Nisan_Roughgarden_Tardos_Vazirani_2007}
N.~Nisan, T.~Roughgarden, E.~Tardos, and V.~V. Vazirani.
\newblock \emph{Algorithmic Game Theory}.
\newblock Cambridge University Press, 2007.

\bibitem[Odoni(1987)]{10.1007/978-3-642-86726-2_17}
Amedeo~R. Odoni.
\newblock The flow management problem in air traffic control.
\newblock In Amedeo~R. Odoni, Lucio Bianco, and Giorgio Szeg{\"o}, editors, \emph{Flow Control of Congested Networks}, pages 269--288, Berlin, Heidelberg, 1987. Springer Berlin Heidelberg.
\newblock ISBN 978-3-642-86726-2.

\bibitem[Pertuiset and Santos(2014)]{PERTUISET201466}
Thomas Pertuiset and Georgina Santos.
\newblock Primary auction of slots at european airports.
\newblock \emph{Research in Transportation Economics}, 45:\penalty0 66--71, 2014.
\newblock ISSN 0739-8859.
\newblock \doi{https://doi.org/10.1016/j.retrec.2014.07.009}.
\newblock URL \url{https://www.sciencedirect.com/science/article/pii/S0739885914000304}.
\newblock Pricing and Regulation in the Airline Industry.

\bibitem[Qin and Balakrishnan(2022)]{balakrishnan2022cost}
Victor Qin and Hamsa Balakrishnan.
\newblock Cost-aware congestion management protocols for advanced air mobility.
\newblock 2022.

\bibitem[Rassenti et~al.(1982)Rassenti, Smith, and Bulfin]{68d5566f-a8c4-3aac-89e0-3dad8b73de7f}
S.~J. Rassenti, V.~L. Smith, and R.~L. Bulfin.
\newblock A combinatorial auction mechanism for airport time slot allocation.
\newblock \emph{The Bell Journal of Economics}, 13\penalty0 (2):\penalty0 402--417, 1982.
\newblock ISSN 0361915X.
\newblock URL \url{http://www.jstor.org/stable/3003463}.

\bibitem[Roy and Tomlin(2007)]{4282854}
Kaushik Roy and Claire~J. Tomlin.
\newblock Solving the aircraft routing problem using network flow algorithms.
\newblock In \emph{2007 American Control Conference}, pages 3330--3335, 2007.
\newblock \doi{10.1109/ACC.2007.4282854}.

\bibitem[Schrijver(1998)]{10.5555/17634}
Alexander Schrijver.
\newblock \emph{Theory of Linear and Integer Programming}.
\newblock John Wiley \& Sons, Inc., USA, 1998.
\newblock ISBN 0471908541.

\bibitem[Seuken et~al.(2022)Seuken, Friedrich, and Dierks]{seuken2022market}
Sven Seuken, Paul Friedrich, and Ludwig Dierks.
\newblock Market design for drone traffic management.
\newblock In \emph{Proceedings of the AAAI Conference on Artificial Intelligence}, volume~36, pages 12294--12300, 2022.

\bibitem[Shao et~al.(2021)Shao, Shao, and Lu]{SHAO2021103385}
Quan Shao, Mengxue Shao, and Yang Lu.
\newblock Terminal area control rules and evtol adaptive scheduling model for multi-vertiport system in urban air mobility.
\newblock \emph{Transportation Research Part C: Emerging Technologies}, 132:\penalty0 103385, 2021.
\newblock ISSN 0968-090X.
\newblock \doi{https://doi.org/10.1016/j.trc.2021.103385}.
\newblock URL \url{https://www.sciencedirect.com/science/article/pii/S0968090X21003843}.

\bibitem[Skorup(2019)]{skorup2019auctioning}
Brent Skorup.
\newblock Auctioning airspace.
\newblock \emph{NCJL \& Tech.}, 21:\penalty0 79, 2019.

\bibitem[Su et~al.(2023)Su, Lin, Li, and Wei]{9965945}
Pan-Yang Su, Kuang-Hsun Lin, Yi-Yun Li, and Hung-Yu Wei.
\newblock Priority-aware resource allocation for 5g mmwave multicast broadcast services.
\newblock \emph{IEEE Transactions on Broadcasting}, 69\penalty0 (1):\penalty0 246--263, 2023.
\newblock \doi{10.1109/TBC.2022.3221696}.

\bibitem[Sun et~al.(2023)Sun, Wei, and Xie]{sun2023fair}
Luying Sun, Peng Wei, and Weijun Xie.
\newblock Fair and risk-averse urban air mobility resource allocation under uncertainties.
\newblock \emph{Available at SSRN 4343979}, 2023.

\bibitem[Wang et~al.(2023)Wang, Deng, Ni, Smith, Li, and Saigal]{wang2023learning}
Ben Wang, Zilong Deng, Xuan Ni, Kevin~B Smith, Max~Z Li, and Romesh Saigal.
\newblock Learning-driven airspace congestion pricing for advanced air mobility.
\newblock In \emph{AIAA SCITECH 2023 Forum}, page 0547, 2023.

\bibitem[Wei et~al.(2023)Wei, Krois, Block, Cobb, Chatterji, and Kurian]{wei2023arrival}
Peng Wei, Paul Krois, Joseph Block, Paul Cobb, Gano Chatterji, and Cherie Kurian.
\newblock Arrival management for high-density vertiport and terminal airspace operations.
\newblock 2023.

\bibitem[Wu et~al.(2022)Wu, Yang, Wei, and Chen]{9756871}
Pengcheng Wu, Xuxi Yang, Peng Wei, and Jun Chen.
\newblock Safety assured online guidance with airborne separation for urban air mobility operations in uncertain environments.
\newblock \emph{IEEE Transactions on Intelligent Transportation Systems}, 23\penalty0 (10):\penalty0 19413--19427, 2022.
\newblock \doi{10.1109/TITS.2022.3163657}.

\end{thebibliography}
\bibliographystyle{plainnat}
\newpage
\appendix
\section{Proof of Lemma \ref{lem: EquivalenceOfFeasibleSet}}
\label{app: proof 1}
\begin{proof}
    First, note that any feasible solution to \eqref{eq: ILP_FlowBalance}-\eqref{eq: ILP_Increasingq_t} has the same value of \(\textbf{A}_e(x)\) for \(e\in \auxE_6\cup\auxE_7\cup\auxE_9\) since the lower and upper bound on capacity are the same on these edges by construction. Thus, it is sufficient to show that the values of  \(\textbf{A}_e\) for \(e\in \auxE_3\cup\auxE_5\cup\auxE_6\cup\auxE_7\cup\auxE_8\cup\auxE_9\) uniquely determine a feasible solution \((\textbf{A},\delta)\) that satisfies \eqref{eq: ILP_FlowBalance}-\eqref{eq: ILP_Increasingq_t}. Particularly, we will show that we can uniquely recover the values of \(\textbf{A}_e\) for \(e\in \auxE_1\cup\auxE_2\cup\auxE_4.\)
    
    To show this claim, we  leverage the flow balance constraint \eqref{eq: ILP_FlowBalance} at every node. Below, we state the incoming and outgoing edges from every type of node in the network. 
    \begin{center}
\begin{tabular}{||c || c | c||} 
 \hline
 Vertex & Incoming Edges & Outgoing Edges \\ [0.5ex] 
 \hline\hline
 \(\verAux^\arr(r,t)\) & \({\auxE_5}\) & \(\auxE_1\) \\ 
 \hline
 \(\verAux(i,j,\tau)\) &  \(\auxE_4\)  & \({\auxE_5}\)\\
 \hline
  \(\verAux^\dep(r,t)\) &  \(\auxE_2\)  & \({\auxE_4}\) \\
 \hline
 \(\verAux(i,j,0)\) & \(\auxE_7\) & \(\auxE_9\) \\
 \hline
 \(\verAux(r,t)\) & \(\auxE_{1}, \auxE_{3}, \auxE_{6}, \auxE_{9}\) &  \(\auxE_{2}, \auxE_{3}, \auxE_{8}\) \\ 
 \hline
\end{tabular}
\end{center}
Note that flow balance at nodes of the form \(\verAux^\arr(r,t)\) will determine the values \(\textbf{A}_e\) on edge \(\auxE_1\), as we know these values for edges \(\auxE_5\). Next, flow balance at  nodes of the form \(\verAux(i,j,\tau)\) will determine the values \(\textbf{A}_e\) on edge \(\auxE_4\), as we know these values for edges \(\auxE_5\). This and the capacity constraints on \(\auxE_4\), ensure that we know the value of \(\delta\). Next, flow balance at nodes of the form \(\verAux^\dep(r,t)\) will determine the values \(\textbf{A}_e\) on edge \(\auxE_2\), as we can uniquely determine these values on \(\auxE_4\). Finally, flow balance at nodes of the form \(\verAux(r,t)\) will determine the values \(\textbf{A}_e\) on edge \(\auxE_2\), as we can uniquely determine these values on \(\auxE_1\cup\auxE_3\cup\auxE_6\cup \auxE_8\cup\auxE_9\).

\end{proof}

\section{Table of Notations}\label{sec: TON}

\begin{center}
\begin{tabular}{||c l||} 
 \hline
Notation & Description  \\ [0.5ex] 
 \hline\hline
 \(\sector\) & Set of vertiports
 \\
 \hline
 \(\OptSet\) & Set of fleet operators\\ 
 \hline
 \(\OptsUAVNum_i\) & Set of eVTOL aircraft in the fleet of operator \(i\in \OptSet\)   \\
 \hline
 \(\horizon\) & Scheduling horizon \\
 \hline
 \(\UAV_{i,j}\) &  The identification of the \(j-\)th aircraft in \(\OptsUAVNum_i\) \\
 \hline
 \(\origsector_{i,j}\) & Origin vertiport of aircraft \(a_{i,j}\) 
 \\ 
   \hline
 \(\UAVMenuNum_{i,j}\) & Set of available routes of aircraft \(a_{i,j}\)\\
  \hline
 \(\depart_{i,j,k}\) & Departure time for aircraft \(a_{i,j}\) if it chooses the \(k-\)th route in \(\UAVMenuNum_{i,j}\)
 \\ 
 \hline
 \(\finalsector_{i,j,k}\) & Destination vertiport of aircraft \(a_{i,j}\) if it chooses the \(k-\)th route in \(\UAVMenuNum_{i,j}\)
 \\ 
 \hline 
\(\arrive_{i,j,k}\) & Arrival time of aircraft \(a_{i,j}\) at \(\finalsector_{i,j,k}\) if it chooses the \(k-\)th route in \(\UAVMenuNum_{i,j}\)
 \\ 
 \hline 
\(\val_{i,j,k}\) & Valuation derived by aircraft \(a_{i,j}\) if it is allocated the \(k-\)th route in \(\UAVMenuNum_{i,j}\)
 \\ 
  \hline 
\(\bid_{i,j,k}\) &  Bid for aircraft \(a_{i,j}\) to be allocated the the \(k-\)th route in \(\UAVMenuNum_{i,j}\)
 \\ 
 \hline 
\(\allocation_{i,j,k}\) &  Binary variable denoting whether aircraft \(\UAV_{i,j}\) is allocated the \(k-\)th route in \(\UAVMenuNum_{i,j}\)
 \\
 \hline 
\(S(r,t,x)\) & Number of aircraft at vertiport \(r\in \sector\) at time \(t \in [\horizon]\) under allocation \(x\)
 \\
  \hline 
\(\arr(r,t)\) & Arrival capacity of vertiport \(r\in\sector\) at time \(t\in [\horizon]\)
 \\
 \hline
 \(\dep(r,t)\) & Departure capacity of vertiport \(r\in\sector\) at time \(t\in [\horizon]\)
 \\
  \hline
 \(\capacity(r,t)\) & Parking capacity of vertiport \(r\in\sector\) at time \(t\in [\horizon]\)
 \\
 \hline
 \(\SW(x;V)\) & Social welfare under allocation \(x\) if the valuation of aircraft is \(V=(v_{i,j,k})_{i\in F, j\in A_i, k\in \UAVMenuNum_{i,j}}\)
 \\
 \hline
 \(\pseudoBid\) & Pseudo-bids
 \\
 \hline
 \(C_{r,t}(\cdot)\) & Congestion function of vertiport \(r\in\sector\) at time \(t\in [H]\)
 \\
 \hline
 \(\auxG\) & Auxiliary graph for the optimization algorithm
 \\
 \hline
 \(\auxV\) & Set of vertices of the auxiliary graph
 \\
 \hline
 \(\auxE\) & Set of edges of the auxiliary graph
 \\
 [1ex] 
 \hline
\end{tabular}
\end{center}

\section*{ACKNOWLEDGMENTS}
Chinmay Maheshwari, Pan-Yang Su and Shankar Sastry acknowledge support from NSF Collaborative Research: Transferable, Hierarchical, Expressive, Optimal, Robust, Interpretable NETworks (THEORINET) Award No. DMS 2031899. Victoria Tuck and Shankar Sastry acknowledge support from Provably Correct Design of Adaptive Hybrid Neuro-Symbolic Cyber Physical Systems, Defense Advanced Research Projects Agency award number FA8750-23-C-0080. We thank Maria Gabriela Mendoza and Hamsa Balakrishnan for the helpful discussions. Microsoft Copilot was used to assist code development.
\clearpage


\addtolength{\textheight}{-12cm}   


\end{document}